
\documentclass[copyright,creativecommons]{eptcs}

\usepackage{amssymb}
\usepackage{amsmath}
\usepackage{amsthm}
\usepackage{breakurl}
\usepackage[mathscr]{euscript}
\usepackage{graphicx}
\usepackage{mathtools}
\usepackage{times}
\usepackage{xcolor}

\DeclareMathAlphabet{\mathcal}{OMS}{cmsy}{m}{n}


\mathcode`:="103A       
\mathcode`;="203B       
\mathcode`?="003F       
\mathcode`|="026A       
\mathcode`<="4268       
\mathcode`>="5269       

\mathchardef\ls="213C   
\mathchardef\gr="213E   
\mathchardef\uparrow="0222      
\mathchardef\downarrow="0223    

\newcommand{\blankline}{\vspace*{\baselineskip}}

\newcommand{\nop}[1]{}

\newcommand{\CTMC}{\textsl{CTMC} \mkern1mu}
\newcommand{\DTMC}{\textsl{DTMC} \mkern1mu}
\newcommand{\Distr}{\textsl{Distr} \mkern1mu}
\newcommand{\FALSE}{\textbf{false}}
\newcommand{\FS}{\mkern1mu \mathrm{\mathcal{F} \mkern-2.5mu \mathcal{S}}}
\newcommand{\FuTS}{\textsl{Fu\hspace*{-0.5pt}TS} \mkern1mu}

\newcommand{\IMC}{\textsl{IMC} \mkern1mu}
\newcommand{\LTS}{\textsl{LTS} \mkern1mu}
\newcommand{\MA}{\textsl{MA} \mkern1mu}
\newcommand{\PA}{\textsl{PA} \mkern1mu}
\newcommand{\PEPA}{\textsl{PEPA} \mkern1mu}

\newcommand{\SPC}{\textsl{SPC} \mkern1mu}
\newcommand{\Set}{\textbf{Set}}
\newcommand{\TRUE}{\textbf{true}}
\newcommand{\ULTraS}{\textsl{ULTraS} \mkern1mu}

\newcommand{\approxT}{\approx_{\mkern2mu \calT}}
\newcommand{\bfOmega}{\boldsymbol{\Omega}}

\newcommand{\bfR}{\mathbf{R}}
\newcommand{\bfT}{\mathbf{T}}
\newcommand{\bools}{\mathbb{B}}
\newcommand{\compose}{\mathop{\raisebox{0.5pt}{\scriptsize $\circ$}}}
\newcommand{\fmorph}[3]{[\![ {#3} ]\!]^{#2}_{#1}}
\newcommand{\fsfn}[2]{\mkern1mu \mathrm{\mathcal{F} \mkern-2.5mu \mathcal{S}}( \mkern1mu #1,#2 \mkern2mu )}
\newcommand{\lift}{{\cal LT}}
\newcommand{\lc}{\mathopen{\lbrace \;}}
\newcommand{\mtrans}[2]{\, \xRightarrow{\, #1 \,}_{#2} \,}
\newcommand{\myell}{{\ell \mkern2mu}}

\newcommand{\nnreals}{\reals_{\geqslant 0}}

\newcommand{\rc}{\mathclose{\; \rbrace}}
\newcommand{\reals}{\mathbb{R}}
\newcommand{\singleton}[1]{\lbrace {#1} \rbrace}
\newcommand{\sbar}{\bar{s}}
\newcommand{\thetaC}{\theta_\calC}
\newcommand{\thetaE}{\theta_E}
\newcommand{\thetaI}{\theta_\calI}
\newcommand{\thetaII}{\theta^{\mkern1mu \mathrm{I}}_\calI}
\renewcommand{\thetaII}{\theta'_\calI}
\newcommand{\thetaIM}{\theta'_\calM}

\newcommand{\thetaL}{\theta_\calL}
\newcommand{\thetaM}{\theta_\calM}
\newcommand{\thetaMI}{\theta^{\mkern1mu \mathrm{M}}_\calI}
\renewcommand{\thetaMI}{\theta''_\calI}
\newcommand{\thetaMM}{\theta''_\calM}

\newcommand{\thetaP}{\theta_\calP}
\newcommand{\trans}[2]{\, \xrightarrow{\, #1 \,}_{#2} \,}
\newcommand{\tssum}{\textstyle{\sum \,}}
\newcommand{\zbar}{\bar{z}}

\newcommand{\calA}{\mathscr{A}}
\newcommand{\calC}{\mathscr{C}}

\newcommand{\calF}{\mathscr{F}}
\newcommand{\calI}{\mathscr{I}}
\newcommand{\calL}{\mathscr{L}}
\newcommand{\calM}{\mathscr{M}}
\newcommand{\calP}{\mathscr{P}}
\newcommand{\calR}{\mathscr{R}}

\newcommand{\calT}{\mathscr{T}}
\newcommand{\calX}{\mathscr{X}}

\hyphenation{semi-ring}
\hyphenation{semi-rings}


\newtheorem{theorem}{Theorem}
\newtheorem{lemma}{Lemma}
\newtheorem{definition}[theorem]{Definition}

\title{A Definition Scheme for Quantitative Bisimulation}
\author{%
  Diego Latella \& Mieke Massink
  \institute{Formal Methods and Tools \\ 
    CNR/ISTI, Pisa, Italy}
  \and
  Erik de Vink\thanks{Corresponding author, email~\url{evink@win.tue.nl}.}
  \institute{Department of Mathematics and Computer Science, TU/e \\
    Eindhoven, the Netherlands \\
    Centrum voor Wiskunde en Informatica \\ 
    Amsterdam, the Netherlands}
}

\begin{document}

\maketitle

\begin{abstract}
  \textbf{Abstract} $\FuTS$, state-to-function transition systems
  are generalizations of labeled transition systems and of familiar
  notions of quantitative semantical models as continuous-time Markov
  chains, interactive Markov chains, and Markov automata. A general
  scheme for the definition of a notion of strong bisimulation
  associated with a $\FuTS$ is proposed. It is shown that this notion
  of bisimulation for a $\FuTS$ coincides with the coalgebraic notion
  of behavioral equivalence associated to the functor on~$\Set$ given by
  the type of the~$\FuTS$. For a series of concrete quantitative
  semantical models the notion of bisimulation as reported in the
  literature is proven to coincide with the notion of quantitative
  bisimulation obtained from the scheme. The comparison includes
  models with orthogonal behaviour, like interactive Markov chains,
  and with multiple levels of behavior, like Markov automata. As a
  consequence of the general result relating $\FuTS$ bisimulation and
  behavioral equivalence we obtain, in a systematic way, a coalgebraic
  underpinning of all quantitative bisimulations discussed.

  \medskip

  \noindent
  \textbf{Keywords} 
  quantitative automata,
  state-to-function transition system, 
  bisimulation
\end{abstract}


\section{Introduction}
\label{sec:intro}

State-to-Function Labeled Transition Systems ($\FuTS$) have been
introduced in~\cite{DLLM13:cs} as a general framework for the formal
definition of the semantics of process calculi, in particular of
stochastic process calculi ($\SPC$), as an alternative to the
classical approach, based on labelled transition
systems~($\LTS$). In~$\LTS$, a transition is a triple $(s, \alpha, s'
\mkern1mu )$ where $s$ and~$\alpha$ are the source state and the label
of the transition, respectively, while $s'$ is the target state
reached from~$s$ via a transition labeled with~$\alpha$. In~$\FuTS$, a
transition is a triple of the form $(s, \alpha, \varphi \mkern2mu
)$. The first and second component are the source state and the label
of the transition, as in~$\LTS$, while the third component~$\varphi$
is a \emph{continuation function} (or simply a \emph{continuation} in
the sequel), which associates a value from an appropriate semiring
with each and every state~$s'$. If~$\varphi$~maps $s'$ to the
$0$~element of the semiring, then state~$s'$ cannot be reached
from~$s$ via this transition. A non-zero value for a state~$s'$
represents a quantity associated with the jump of the system from $s$
to~$s'$. For instance, for continuous-time Markov chains ($\CTMC$),
the model underlying prominent~$\SPC$, this quantity is the rate of
the negative exponential distribution characterizing the time for the
execution of the action represented by~$\alpha$, necessary to reach
$s'$ from~$s$ via the transition. 

We note that for the coalgebraic treatment itself of $\FuTS$ we
propose here it is not necessary for the co-domain of continuations to
be semirings; working with monoids would be sufficient. However, the
richer structure of semirings is convenient, if not essential, when
using continuations and their operators in the formal definition of
the $\FuTS$ semantics of $\SPC$.  The use of continuations provides a
clean and simple solution to the transition multiplicity problem and
makes $\FuTS$ with~$\nnreals$ as semiring particularly suited for
$\SPC$~semantics. We refer to~\cite{DLLM13:cs} for a thorough
discussion on the use of~$\FuTS$ as the underlying semantics model for
$\SPC$---including those where (continuous-time) stochastic behaviour
is integrated with non-determinism as well as with discrete
probability distributions over behaviours---and for their comparison
with other recent approaches to a uniform treatment of
$\SPC$~semantics, e.g.\ Rated Transition Systems~\cite{KS08:fossacs},
Rate Transition Systems~\cite{DLLM09:icalp}, Weighted Transition
Systems~\cite{Kli09:mosses}, and $\ULTraS$~\cite{BDL13:ic}.



In~\cite{LMV12:accat} we proposed a coalgebraic view of~$\FuTS$
involving functors $\fsfn{{\cdot}}{\calR}$, for $\calR$ a semiring, in
place to deal with quantities. Here, for a set of states~$S$, we use
$\fsfn{S}{\calR}$ to denote the set of all finitely supported
functions from~$S$ to~$\calR$. Discrete probability distributions are
a typical example of such a function space. Note, the functor
$\fsfn{{\cdot}}{\calR}$ over a monoid or semiring~$\calR$, unifies the
finite subsets functor $\calP_\omega( {\cdot} )$ and the discrete
probability distributions functor $\Distr({\cdot})$, choosing the
semirings $\bools$ and~$\nnreals$, respectively (and requiring
restrictions on the continuations for the latter). Coalgebras, also
built from the finite support functor, are of interest because they
come equipped, under mild conditions, with a canonical notion of
`bisimulation' known as behavioral equivalence~\cite{Rut00:tcs}. Our
earlier work presented a notion of bisimulation for~$\FuTS$ and we
proved a correspondence result stating that $\FuTS$ bisimilarity
coincides with the behavioral equivalence of the associated
functor. We applied the framework to prominent~$\SPC$ like
$\PEPA$~\cite{Hil96:phd} and a language for Interactive Markov Chains
($\IMC$) of~\cite{Her02:springer}, thus providing coalgebraic
justification of the equivalences of these calculi.
In~\cite{MP14:qapl} an approach similar to ours has been applied to
the $\ULTraS$ model, a model which shares some features with simple
$\FuTS$. In~$\ULTraS$ posets are used instead of semirings, although a
monoidal structure is then implicitly assumed when process
equivalences are taken into consideration~\cite{BDL13:ic}.

An interesting direction of research combining coalgebra and
quantities investigates various types of weighted automata, including
linear weighted automata, and associated notions of bisimulation and
languages, as well as algorithms for these
notions~\cite{BBBRS12,Kli09:mosses}. Klin considers weighted
transition systems, labelled transition systems that assign a weight
to each transition and develops Weighted GSOS, a meta-syntactic
framework for defining well-behaved weighted transition systems. For
commutative monoids the notion of a weighted transition system
compares with our notion of a~$\FuTS$, and, when cast in the
coalgebraic setting, the associated concept of bisimulation coincides
with behavioral equivalence.  Weights of transitions of weighted
transition systems are computed by induction on the syntax of process
terms and by taking into account the contribution of all those GSOS
rules that are triggered by the relevant (apparent) weights. Note that
such a set of rules is finite.  So, in a sense, the computation of the
weights is distributed among (the instantiations of) the relevant
rules with intermediate results collected and integrated in the final
weight. In~\cite{MP14:qapl} a general GSOS specification format is
presented which allows for a `syntactic' treatment of continuations
involving so-called~\emph{weight functions}.  A comparison of a wide
range of probabilistic transition systems focusing on coalgebraic
bisimulation is reported in ~\cite{BSV04:tcs,Sok11:tcs}, which provide
hierarchy relating types of transition systems via natural
embeddings. In~\cite{LMV13} the investigation on the relationship
between $\FuTS$ bisimilarity and behavioural equivalence, and also
coalgebraic bisimilarity is presented. In particular, it is shown that
the functor type involved preserves weak pullbacks when the underlying
semiring satisfies the zero-sum property.

So far, all the unifying approaches to modeling of $\SPC$ discussed
above restrict to a single layer of quantities. Although for~$\IMC$
orthogonal transition relations, hence products of continuation sets
need to be considered, overall no nesting of the construction with
finitely supported functions is allowed. The problem appears to lie,
at least for~$\FuTS$, in identifying an appropriate notion of
bisimulation. Bisimulation for probabilistic systems is traditionally
based on equivalence classes \cite{LS91:ic,Bai98:hab,CR11:concur} with
a lifting operator from states to probability distributions as a main
ingredient. In the present paper, we extend the results reported
in~\cite{LMV12:accat} by taking this ingredient of lifting into
account systematically, thus also catering for repeated lifting. In
line with the coalgebraic paradigm, the type of the $\FuTS$, hence not
its transitions, decides the way equivalence classes need to be
lifted. The generalization paves the way to deal with more intricate
interactions of qualitative and quantitative behaviour.  For instance,
now we are capable to deal uniformly with Probabilistic Automata
($\PA$), see e.g.~\cite{SL95:njc,Hen12:facs}, and Markov Automata
($\MA$), cf.~\cite{EHZ10:lics,EHZ10:concur,TKPS12:concur}, which do
not fit in the unifying treatments mentioned above. Thus, the present
paper constitutes an additional step forward to implementing the aim
of a uniform treatment of semantic models for quantitative process
calculi and associated notion of strong bisimulation, but avoiding
implicit transition multiplicity and/or explicit transition decoration
(see~\cite{DLLM13:cs}). In addition, we have coalgebra interpretations
and bisimulation correspondence results at our disposal as a yardstick
for justifying particular process equivalence as being the natural
ones.

The paper is structured as follows: Section~\ref{sec:preliminaries}
introduces notation and concepts related to the nesting of constructs
of the form~$\fsfn{{\cdot}}{\calR}$, for a semiring~$\calR$, and
briefly discusses coalgebraic notions. Section~\ref{sec:futs} recalls
the notion of a~$\FuTS$, distinguishing simple, combined, nested and
general~$\FuTS$. The notion of bisimulation for a~$\FuTS$, the scheme
of defining quantitative bisimulation and the comparison of the latter
with behavioral equivalence is addressed too. Section~\ref{sec:simple}
presents the treatment of~$\LTS$, of~$\CTMC$ and of~$\IMC$
with~$\FuTS$. In particular, the concrete notions of strong
bisimulation for these semantical models are related to $\FuTS$
bisimulation. Section~\ref{sec:nested} continues this, now for $\PA$
and~$\MA$, semantical models that involve nesting. Also for these
models concrete bisimulation and $\FuTS$ bisimulation, as obtained
from the general scheme, are shown to
coincide. Section~\ref{sec:conclusions} wraps up with concluding
remarks.


\section{Preliminaries}
\label{sec:preliminaries}

A semiring~$\calR$ is tuple $( R , {+} , 0 , {\cdot} \mkern2mu , 1 )$
with $(R , {+} , 0 )$ a commutative monoid with zero element~$0$, $(
R, {\cdot} \mkern2mu , 1 )$ a monoid with identity element~$1$ such
that the multiplication~$\cdot$ distributes over the addition~$+$, and
the zero element~$0$ annihilates~$R$, i.e.\ $0 \cdot r = 0 = r \cdot
0$.  A function $f : X \to R$ from a set~$X$ to the carrier of a
semiring~$\calR$ is said to be of finite support if the set $\lc x \in
X \mid f(x) \neq 0 \rc$ is finite. We use $\fsfn{X}{\calR}$ to denote
the set of all functions from~$X$ to~$\calR$ of finite support.  The
shorthand $\fsfn{X}{\calR_1, \ldots, \calR_n}$, for a set~$X$ and
semirings $\calR_1, \ldots , \calR_n$, for $n \geqslant 0$, is given
by $\FS(X) = X$, and $\fsfn{X}{\calR_1, \ldots, \calR_n, \calR_{n+1}}
= \fsfn{ \, \fsfn{X}{\calR_1, \ldots, \calR_n} \,}{\calR_{n+1}}$.  We
let $\fsfn{X}{\calR}^{\calL}$ denote the set of functions from $\calL$
to $\fsfn{X}{\calR}$ and we extend the notation to $\fsfn{X}{\calR_1,
  \ldots, \calR_n}^{\calL}$ in the obvious way.


For a set~$X$, a semiring~$\calR$, a set of finitely supported
functions $F \subseteq \fsfn{X}{\calR}$, and an equivalence
relation~$E$ on~$X$, we define the relation $\lift(E,\calR)_F$ with respect
to~$F$ by $\lift(E,\calR)_F(\varphi,\psi) \iff \varphi[B] = \psi[B]$ for all
$B \in X/E$, for $\varphi,\psi \in F$. Here $\chi[A]$ denotes $\sum_{a
  \in A} \: \chi(a)$, for $\chi \in \fsfn{X}{\calR}$ and $A \subseteq
X$. Note that $\lift(E,\calR)_F$ is an equivalence relation on~$F$. We use
$\lift(E,\calR)$ to denote the lifting of~$E$ with respect to
$\fsfn{X}{\calR}$ itself.
For a sequence of semirings $\calR_1 , \ldots , \calR_n$, we define
the relation $\lift(E, \calR_1, \ldots,\calR_n)$
on~$\fsfn{X}{\calR_1, \ldots , \calR_n}$ by 
$\lift(E) = E$, and 
$\lift(E,\calR_1, \ldots , \calR_{n+1})(\varphi,\psi) \iff
\varphi[C] = \psi[C]$ for all
$C \in \fsfn{X, \calR_1, \ldots}{\calR_n}/\lift(E,\calR_1, \ldots , \calR_{n})$ and
$\varphi,\psi \in \fsfn{X, \calR_1, \ldots}{\calR_{n+1}}$.
%
 By induction on~$n$ one 
establishes that $\lift(E,\calR_1, \ldots , \calR_{n})$ is an
equivalence relation too.

If $E$ and~$F$ are binary relations of the sets
$X$ and~$Y$, respectively, we define the relation $E \times F
\subseteq (X \times Y) \times (X \times Y)$ by $E \times F = \lc (
<x_1,y_1> , <x_2,y_2> ) \mid E(x_1,x_2) \land F(y_1,y_2) \rc$. It holds
that $E \times F$ is an equivalence relation if $E$ and~$F$ are.

For a commutative monoid, hence for a semiring or a field, the functor
$\fsfn{{\cdot}}{\calR} : \Set \to \Set$, on the category~$\Set$ of sets
and functions, assigns to a set~$X$ the set of all finitely supported
functions~$\fsfn{X}{\calR}$, and to a function $f : X \to Y$ the
function $\fsfn{f}{\calR} : \fsfn{X}{\calR} \to \fsfn{Y}{\calR}$ given
by 
\begin{displaymath}
  \fsfn{f}{\calR}( \varphi )( y )
  =
  \tssum_{x \in f^{-1}(y)} \: \varphi(x)
\end{displaymath}
Note, since $\varphi$ is assumed to have a finite support, the
summation at the right side in~$\calR$ may be infinite, but is still
well-defined because of commutativity of~$+$ on~$\calR$. Also note,
only addition of~$\calR$ is used above. However, in concrete
situations, in particular when modeling the parallel operator
for~$\SPC$, multiplication of~$\calR$ is needed as well,
cf.~\cite{LMV12:accat,LMV13}. 

A pair of a set and a mapping $(X,\alpha)$ is called a coalgebra of a
functor~$F$ on~$\Set$ if $\alpha : X \to FX$. A mapping $f : X \to Y$
is a coalgebra homomorphism for two coalgebras $(X,\alpha)$
and~$(Y,\beta)$, if $Ff \compose \alpha = \beta \compose f$. A final
coalgebra of~$F$ is a coalgebra $( \Omega , \omega )$ of~$F$ such that
for every coalgebra~$X$ there exists a unique coalgebra
homomorphism~$\fmorph{F}{X}{{\cdot}} : X \to \Omega$. A $\Set$
functor~$F$ is accessible, if it preserves $\kappa$-filtered colimits,
for some regular cardinal~$\kappa$. Typically, one uses the following
characterization of accessibility: every element of~$FX$, for any
set~$X$, lies in the image of some subset~$Y \subseteq X$ of less than
$\kappa$~elements~\cite{AP04:tcs}. In essence, familiar functors like
the finite powerset functor~$\calP_\omega( {\cdot} )$ and the discrete
probability distribution functor~$\Distr( {\cdot} )$ are
accessible. Also, their generalization used in this paper, viz.\ the
finitely supported functions functor
$\fsfn{{\cdot}}{\calR}$~\cite{Kli09:mosses,BBBRS12} on the
category~$\Set$ is accessible as well. Accessibility is preserved by
exponentiation and products. Therefore, all functors appearing in the
sequel are accessible.

The appealing feature of an accessible functor on~$\Set$ is that it
possesses a final coalgebra. For an accessible $\Set$ functor~$F$ it
follows that the mapping $\fmorph{F}{X}{{\cdot}} : X \to \Omega$ is
well-defined, for each $F$-coalgebra~$(X,\alpha)$. Now, two elements
$x,y \in X$ are called \emph{behaviorally equivalent} for the
functor~$F$, notation $x \approx_F y$, if $\fmorph{F}{X}{x} =
\fmorph{F}{X}{y}$. In a way, if $x \approx_F y$ then $x$ and~$y$ can
be identified according to~$F$. Although $\fmorph{F}{X}{{\cdot}}$ is
an $F$-coalgebra homomorphism, i.e.\ $\omega \compose \mkern1mu
\fmorph{F}{X}{{\cdot}} = F( \fmorph{F}{X}{{\cdot}} ) \compose \alpha$,
one can argue that in fact the functor~$F$ determines which elements
of~$X$ are behavioral equivalent.


\section{State-to-function transition systems}
\label{sec:futs}

We start off with a formal definition of a state-to-function
transition system ($\FuTS$), we introduce the scheme for defining the
notion of bisimulation of a~$\FuTS$, and we relate $\FuTS$
bisimulation to behavioral equivalence for the type functor of
a~$\FuTS$.

\blankline

\begin{definition}
  \label{df-futs}
  A $\FuTS$ for a sequence of label sets $\calL_1 , \ldots ,
  \calL_n$ and a sequence of sequences of semirings $( \calR_{1,j}
  )_{j=1}^{m_1} , \ldots , ( \calR_{n,j} )_{j=1}^{m_n}$, for~$n, m_1,
  \ldots , m_n \gr 0$, is a pair~$\calX = ( X , \theta )$ of a set~$X$
  and a mapping 
  \begin{displaymath}
    \theta \, : \; X \to 
    \fsfn{X}{\calR_{1,1}, \ldots , \calR_{1,m_1} \mkern-3mu}^{\mkern1mu \calL_1}
    \, \times \mkern1mu \cdots \mkern1mu \times \,
    \fsfn{X}{\calR_{n,1}, \ldots , \calR_{n,m_n}
      \mkern-3mu}^{\mkern1mu \calL_n}
  \end{displaymath}
\end{definition}


\noindent
In the sequel, when $n \gr 1$, we usually represent the
mapping~$\theta$ as a tuple of mappings $< \theta_1 , \ldots ,
\theta_n >$ with $\theta_i : X \to \fsfn{X}{\calR_{i,1}, \ldots ,
  \calR_{i,m_i}}^{\mkern1mu \calL_i}$.
In the present paper we focus on the class of \emph{deterministic}
$\FuTS$, namely models where the transition relation is a function, as
in Definition~\ref{df-futs}. Note, this excludes by no means the
treatment of non-deterministic systems. Below and
in~\cite{DLLM13:cs,LMV12:accat} it has been shown that the class
of~$\FuTS$ given by the definition above is sufficiently rich to deal
with all the major stochastic process description languages and their
underlying semantic models.

In the next two sections we will discuss several examples of $\FuTS$
coming in various flavors. In particular we distinguish the cases of a
simple $\FuTS$, of a combined $\FuTS$, and of a nested $\FuTS$.
A~$\FuTS$ of the form $\calX = ( X , \theta )$ with $\theta : X \to
\fsfn{X}{\calR}^{\mkern1mu \calL}$, thus $n=1$, $m_1=1$, is called a
\emph{simple}~$\FuTS$. We will see that the familiar labeled
transition systems ($\LTS$) over an action set~$\calA$ are simple
$\FuTS$ for label set~$\calA$ and semiring~$\bools$, i.e.\ an~$\LTS$
with set of states~$S$ over~$\calA$ corresponds to a simple $\FuTS$
with transition function $\theta : S \to
\fsfn{S}{\bools}^{\calA}$. Likewise, discrete-time and continuous-time
Markov chains ($\DTMC$, $\CTMC$) are simple $\FuTS$ over a degenerate
label set~$\Delta$ and semiring~$\nnreals$. Putting $\Delta =
\singleton{\delta}$, a Markov chain with set of states~$S$ can be
identified with a simple $\FuTS$ with transition function $\theta : S
\to \fsfn{S}{\nnreals}^{\Delta}$. For a $\DTMC$ we will have
$\theta(s)(\delta)[S] = 1$.

A~$\FuTS$ of the form $\calX = ( X , \theta )$ with $\theta = <
\theta_1 , \ldots , \theta_n >$ and $\theta_i : X \to
\fsfn{X}{\calR_i}^{\mkern1mu \calL_i}$, for $i = 1 \ldots n$, thus
$m_1, \ldots, m_n = 1$, is called a
\emph{combined}~$\FuTS$. Interactive Markov chains ($\IMC$,
cf.~\cite{Her02:springer}) are prominent examples of
combined~$\FuTS$. An~$\IMC$ with states from~$S$ and action
set~$\calA$ can be seen as a combined $\FuTS$ for the label sets
$\calA$ and~$\Delta$, and semirings $\bools$ and~$\nnreals$ with a
pair of transition functions $< \theta_1 , \theta_2 >$. Here $\theta_1
: S \to \fsfn{S}{\bools}^\calA$ captures the interactive component of
the~$\IMC$ and $\theta_2 : S \to \fsfn{S}{\nnreals}^\Delta$ captures
the Markovian component.

Finally, a~$\FuTS$ of the form $\calX = ( X , \theta )$ with $\theta :
X \to \fsfn{X}{\calR_1 , \ldots , \calR_m}^{\mkern1mu \calL}$ is
called a \emph{nested}~$\FuTS$. Below we will argue that probabilistic
automata ($\PA$, cf.~\cite{SL95:njc,Hen12:facs}) over an action
set~$\calA$ are nested $\FuTS$ for the label set~$\calA$ and semirings
$\nnreals$ and~$\bools$. A~$\PA$ with states from~$S$ and actions
from~$\calA$ induces a two-level nested $\FuTS$ with transition
function $\theta : S \to \fsfn{S}{\nnreals,\bools}^\calA$, or more
explicitly $\theta : S \to \fsfn{ \, \fsfn{S}{\nnreals} \,
}{\bools}^\calA$. For continuations $\pi \in \fsfn{S}{\nnreals}$
involved, it will hold that $\pi[S] = 1$.

Most automata in the setting of quantitative process languages fall in
the three special types of~$\FuTS$ mentioned (simple, combined or
nested). An important semantic model not captured is that of Markov
automaton (MA, cf.~\cite{EHZ10:lics,EHZ10:concur,TKPS12:concur}). A Markov
automaton can be seen as a `general' $\FuTS$, i.e.\ a~$\FuTS$ that is
not of one of the distinguished types. More precisely, an~$\MA$ with
set of states~$S$ and action set~$\calA$ can be represented as a
$\FuTS$ for the label sets $\calA$ and~$\Delta$ and sequences of the
two semirings $\nnreals,\bools$ and of only one semiring~$\nnreals$,
thus having a pair of transition functions $< \theta_1 , \theta_2 > :
S \to \fsfn{ \, \fsfn{S}{\nnreals} \, }{\bools}^\calA \times
\fsfn{S}{\nnreals}^\Delta$. Here, $\theta_1$~represents the so-called
immediate transition relation, while $\theta_2$~represents the
so-called timed transition relation.

\blankline

\noindent
Next we define a notion of (strong) bisimulation~$\simeq_\calX$ for
a~$\FuTS$~$\calX = ( X , \theta \mkern1mu )$. By the definition below,
a bisimulation relation~$E$ is an equivalence relation on the set of
states~$X$. The relation~$E$ on~$X$ is then lifted to an equivalence
relation~$\calT(E)$ on~$\calT(X)$, invoking the so-called type~$\calT$
of the~$\FuTS$ (formally given in
Definition~\ref{df-futs-type}). For~$E$ to be a $\FuTS$ bisimulation
we require, that $E(x,y)$ for states $x$ and~$y$ implies $\calT(E)(
\theta(x) , \theta(y) )$.


\blankline

\begin{definition}
  \label{df-futs-bisimulation}
  Let $\calX = ( X , \theta )$ be a $\FuTS$ with transition function
  \begin{displaymath}
    \theta \, : \; X \to 
    \fsfn{X}{\calR_{1,1}, \ldots , \calR_{1,m_1} \mkern-3mu}^{\mkern1mu \calL_1}
    \, \times \mkern1mu \cdots \mkern1mu \times \,
    \fsfn{X}{\calR_{n,1}, \ldots , \calR_{n,m_n}
      \mkern-3mu}^{\mkern1mu \calL_n}
  \end{displaymath}
  An equivalence relation $E \subseteq {X \times X}$ is called a
  bisimulation for~$\calX$ if 
  \begin{displaymath}
    E(x,y)
    \implies
    \lift(E,\calR_{1,1}, \ldots , \calR_{1,m_1} \mkern-3mu)^{\mkern1mu \calL_1}
    \, \times \mkern1mu \cdots \mkern1mu \times \,
    \lift(E, \calR_{n,1}, \ldots , \calR_{n,m_n}
      \mkern-3mu)^{\mkern1mu \calL_n}
    \, ( \, \theta(x) ,\, \theta(y) \, )
  \end{displaymath}
  for all $x, y \in X$.  Two states $x,y \in X$ are $\calX$-bisimilar,
  written $x \simeq_\calX y$, if $E(x,y)$ for a bisimulation~$E$
  for~$\calX$.
\end{definition}

\blankline

\noindent
Recall, since $E$ is assumed to be an equivalence relation on~$X$, we
have that $\lift(E, \calR_{i,1} , \ldots , \calR_{i,m_i}
\mkern-3mu)^{\mkern1mu \calL_i}$ is an equivalence relation
on~$\fsfn{X}{\calR_{i,1} , \ldots , \calR_{i,m_i} \mkern-3mu
}^{\mkern1mu \calL_i}$, for all~$i$. Moreover, by
Definition~\ref{df-futs} we have $m_i \gr 0$ for all~$i$. Expanding
the definition of the outer $\lift( \mkern1mu {\cdot}
\mkern1mu,\calR_{i,m_i})$ for the relation involved, yields that if
two states are equivalent, i.e.\ $E(x,y)$, then evaluating
$\theta_i(x)(\myell)[C]$ and $\theta_i(y)(\myell)[C]$ amounts to the
same, for all labels~$\ell \in \calL_i$ and equivalence classes~$C$
of~$\lift(E,\calR_{i,1} , \ldots , \calR_{i,m_i-1})$, for all~$i$,
i.e.\ $\theta_i(x)(\myell)[C] = \theta_i(y)(\myell)[C]$, for $i = 1 ,
\ldots , n$.

For simple or combined $\FuTS$ the scheme is all straightforward since
$\lift(E,\calR_{i,1} , \ldots , \calR_{i,m_i-1})$ is just~$E$,
cf.~\cite{LMV12:accat}. However, when the codomain of the $\FuTS$
involves nested applications of the $\fsfn{\cdot}{\calR}$ operator,
$\fsfn{X}{\calR_{i,1} , \ldots , \calR_{i,m_i-1}}$ is of a higher
functional level than the set~$X$ itself. Therefore, we need to push
the relation~$E$ up, so to speak pushing it through the component
operators of the codomain.

In the sequel we will see several examples of quantitative automata
and their notion of bisimulation from the literature to coincide with
their $\FuTS$ representation and $\FuTS$ bisimulation of
Definition~\ref{df-futs-bisimulation}. However, the point is that
$\FuTS$ bisimulation can also be captured coalgebraically. In fact,
$\FuTS$ bisimulation and so-called behavioral
equivalence~\cite{Kur00:phd,Sta11:lmcs} are the same. Therefore, by
relating a quantitative execution model like a $\CTMC$, $\IMC$,
or~$\MA$ with a suitable~$\FuTS$, immediately provides coalgebraic
justification of the specific notion of bisimulation as the natural
(strong) process equivalence.

\blankline

\begin{definition}
  \label{df-futs-type}
  A $\FuTS$ $\calX = ( X , \theta )$ with transition function
  \begin{displaymath}
    \theta \, : \; X \to 
    \fsfn{X}{\calR_{1,1}, \ldots , \calR_{1,m_1} \mkern-3mu}^{\mkern1mu \calL_1}
    \, \times \mkern1mu \cdots \mkern1mu \times \,
    \fsfn{X}{\calR_{n,1}, \ldots , \calR_{n,m_n}
      \mkern-3mu}^{\mkern1mu \calL_n}
  \end{displaymath}
  is called a $\FuTS$ of type~$\calT$, for the $\Set$-functor~$\calT$
  given by
  \begin{displaymath}
    \calT = 
    \fsfn{ {\ldots} \fsfn{\cdot}{\calR_{1,1}} {\, \ldots \,}
    }{\calR_{1,m_1} \mkern-3mu}^{\mkern1mu \calL_1}  
    \, \times \mkern1mu \cdots \mkern1mu \times \,
    \fsfn{ {\ldots} \fsfn{\cdot}{\calR_{n,1}} {\, \ldots \,}
    }{\calR_{n,m_n} \mkern-3mu}^{\mkern1mu \calL_n} 
  \end{displaymath}
\end{definition}

\blankline

\noindent
Thus, if a $\FuTS$~$\calX$ is of type~$\calT$ for a functor~$\calT$,
then, in turn, $\calX$~is a coalgebra of~$\calT$. Note, $\calT$ is a
composition of $\Set$-functors: `finite support' functors
$\fsfn{\cdot}{\calR}$, exponentiation functors~$({\cdot})^{\mkern1mu
  \calL}$, and product functors $({\cdot}) \times ({\cdot})$. This
restricted form gives rise to the following result.

\blankline

\begin{theorem}
  \label{th-final-coalgebra}
  If a functor~$\calT$ on~$\Set$ is the type of a~$\FuTS$, then
  $\calT$ possesses a final coalgebra.
\end{theorem}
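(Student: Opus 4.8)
\noindent
The plan is to recognize $\calT$ as a finite composite of $\Set$-endofunctors, each of which preserves $\kappa$-filtered colimits for one suitably chosen regular cardinal~$\kappa$, and then to invoke the fact recalled in Section~\ref{sec:preliminaries} that every accessible $\Set$-endofunctor possesses a final coalgebra. By Definition~\ref{df-futs-type}, $\calT$ is built from the identity functor by finitely many steps of three kinds: post-composition with a finite-support functor $\fsfn{\cdot}{\calR_{i,j}}$, post-composition with an exponentiation functor $({\cdot})^{\calL_i}$, and formation of the $n$-fold product of the resulting functors. Since $n$ and $m_1, \ldots, m_n$ are finite by Definition~\ref{df-futs}, only finitely many such building blocks occur.

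First I would fix the cardinal. Using the characterization of accessibility recalled in Section~\ref{sec:preliminaries} --- every element of $FX$ arises from a subset of $X$ of fewer than $\kappa$ elements --- one checks that each finite-support functor $\fsfn{\cdot}{\calR}$ is finitary: a finitely supported $\varphi \in \fsfn{X}{\calR}$ is the image, under the map induced by the inclusion of its own (finite) support into~$X$, of the corresponding function on that support. Likewise each exponentiation functor $({\cdot})^{\calL_i}$ preserves $\lambda$-filtered colimits for every regular~$\lambda$ with $\lambda \gr |\calL_i|$, since a function $g : \calL_i \to X$ factors through its image, a subset of~$X$ of size at most~$|\calL_i|$. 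Taking a regular cardinal~$\kappa$ with $\kappa \gr \omega$ and $\kappa \gr |\calL_i|$ for $i = 1, \ldots, n$ --- possible as there are only finitely many label sets --- makes all of these basic functors, as well as the identity functor, preserve $\kappa$-filtered colimits.

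It then remains to observe that the class of $\Set$-endofunctors preserving $\kappa$-filtered colimits is closed under finite products and under composition; this is the only step requiring a little care. Closure under products is immediate from the characterization, as a union of two subsets each of size less than~$\kappa$ again has size less than~$\kappa$ by regularity of~$\kappa$. For a composite $F \compose G$, given $t \in F(GX)$ that arises from some $Z \subseteq GX$ with $|Z| \ls \kappa$, one writes each $z \in Z$ as coming from a subset $Y_z \subseteq X$ with $|Y_z| \ls \kappa$, sets $Y = \bigcup_{z \in Z} Y_z$ (again of size less than~$\kappa$ by regularity), and uses that $\Set$-endofunctors preserve surjections to conclude that $t$ arises from~$Y$. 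Hence $\calT$ preserves $\kappa$-filtered colimits, i.e.\ is accessible --- alternatively, one may simply quote the standard closure of accessible $\Set$-endofunctors under composition, products, and exponentials~\cite{AP04:tcs}, of which the above is the explicit verification for the functors at hand.

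With $\calT$ accessible, the fact recalled in the preliminaries yields a final coalgebra for~$\calT$ (obtainable, for instance, as the limit of the terminal chain $1 \leftarrow \calT 1 \leftarrow \calT^2 1 \leftarrow \cdots$ continued transfinitely), which is exactly the claim. I expect the main obstacle to be bookkeeping rather than conceptual: committing to a single regular cardinal~$\kappa$ that simultaneously serves all finitely many ingredient functors, and checking that $\kappa$-accessibility is preserved by composition; everything else is a direct unwinding of Definition~\ref{df-futs-type} against the facts already assembled in Section~\ref{sec:preliminaries}.
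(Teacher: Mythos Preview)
Your proposal is correct and follows the same strategy as the paper's proof: show that the finite-support functors are accessible, observe that accessibility is preserved under exponentiation, products, and composition, and conclude via~\cite{AP04:tcs} that $\calT$ has a final coalgebra. You simply spell out in more detail what the paper leaves to references --- in particular the choice of a single regular~$\kappa$ dominating all~$|\calL_i|$ and the verification of closure under composition --- but the approach is identical.
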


\begin{proof}
  Functors of the form $\fsfn{\cdot}{\calR}$ can be shown to be
  accessible using a standard argument,
  cf.~\cite{Kli09:mosses,BBBRS12}. Accessibility is preserved by
  products, exponentiation and composition. It follows that $\calT$
  itself is accessible, and hence has a final coalgebra,
  see~\cite{AP04:tcs}.
\end{proof}

\blankline

\noindent
Let $\calX = (X,\theta)$ be a $\FuTS$ of type functor~$\calT$ and let
$\bfOmega = ( \Omega , \omega )$ denote the final coalgebra
of~$\calT$. By finality of~$\bfOmega$ there exists a unique
$\calT$-homomorphism $\fmorph{\calX}{\calT}{{\cdot}} : X \to
\Omega$. Behavioral equivalence~$\approxT$ is then defined as $x
\approxT y$ iff $\fmorph{\calX}{\calT}{x} =
\fmorph{\calX}{\calT}{y}$. We have the following result  relating FuTS
bisimilarity~$\simeq_\calX$ to behavioral equivalence~$\approxT$ of
the type functor~$\calT$.

\blankline

\begin{theorem}[correspondence theorem]
  \label{th-futs-bisim-beh-equiv}
  Let $\calX = ( X , \theta )$ be a $\FuTS$ of type~$\calT$ for the
  functor~$\calT$ on~$\Set$. Then it holds that $x \simeq_\calX y$ iff
  $x \approxT y$, i.e.\ $\FuTS$ bisimulation and behavioral
  equivalence coincide.
  \qed
\end{theorem}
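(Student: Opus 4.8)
The plan is to prove the two inclusions $\simeq_\calX \mathbin{\subseteq} \approxT$ and $\approxT \mathbin{\subseteq} \simeq_\calX$ separately, using the characterization of behavioral equivalence via the unique morphism into the final coalgebra. The conceptual bridge in both directions is the observation that, for a $\FuTS$ of type~$\calT$, an equivalence relation $E$ on~$X$ is a $\FuTS$ bisimulation precisely when the quotient map $q : X \to X/E$ can be equipped with a $\calT$-coalgebra structure on~$X/E$ making $q$ a coalgebra homomorphism. So the heart of the argument is a lemma connecting the lifting operators $\lift(E,\calR_1,\dots,\calR_m)^{\mkern1mu \calL}$ and the product construction $E_1 \times \dots \times E_n$ to the functor action of $\calT$ on the quotient map~$q$.

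\medskip

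\noindent
\textbf{From $\FuTS$ bisimulation to behavioral equivalence.} Let $E$ be a bisimulation with $E(x,y)$, and let $q : X \to X/E$ be the quotient. First I would prove, by induction on the nesting depth $m$, the key fact that $\lift(E,\calR_1,\dots,\calR_m)(\varphi,\psi)$ holds iff $\fsfn{q}{\calR_1,\dots,\calR_m}(\varphi) = \fsfn{q}{\calR_1,\dots,\calR_m}(\psi)$; the base case $m=0$ is the definition of~$E$, and the inductive step unwinds the definition $\fsfn{f}{\calR}(\varphi)(y) = \tssum_{x \in f^{-1}(y)} \varphi(x)$, noting that fibers of the quotient map at level $m{+}1$ are exactly the equivalence classes of $\lift(E,\calR_1,\dots,\calR_m)$, so that $\varphi[C] = \psi[C]$ for all such classes~$C$ is the same as equality after applying $\fsfn{q}{\calR_{m+1}}$. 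Adding the exponent $\calL$ and the product over $i = 1,\dots,n$ is then pointwise and componentwise and poses no difficulty. Consequently, the bisimulation condition $\calT(E)(\theta(x),\theta(y))$ is equivalent to $\calT(q)(\theta(x)) = \calT(q)(\theta(y))$, which means $\calT(q) \compose \theta$ factors through~$q$, yielding a (necessarily unique) map $\bar\theta : X/E \to \calT(X/E)$ with $\bar\theta \compose q = \calT(q) \compose \theta$; i.e.\ $q$ is a coalgebra homomorphism from $(X,\theta)$ to $(X/E,\bar\theta)$. Composing with the unique morphism $\fmorph{\calX}{\calT}{{\cdot}}$ out of $X$ and using finality (the morphism out of $X/E$ composed with $q$ equals the morphism out of $X$), we get $\fmorph{\calX}{\calT}{x} = \fmorph{\calX}{\calT}{y}$, i.e.\ $x \approxT y$.

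\medskip

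\noindent
\textbf{From behavioral equivalence to $\FuTS$ bisimulation.} Conversely, define $E$ on $X$ by $E(x,y) \iff \fmorph{\calX}{\calT}{x} = \fmorph{\calX}{\calT}{y}$; this is plainly an equivalence relation. I need $\calT(E)(\theta(x),\theta(y))$ whenever $E(x,y)$. The morphism $h = \fmorph{\calX}{\calT}{{\cdot}}$ factors as $h = \iota \compose q$ where $q : X \to X/E$ is the quotient and $\iota : X/E \to \Omega$ is injective (this is exactly what it means for $E$ to be the kernel of~$h$). From $\omega \compose h = \calT(h) \compose \theta$ and injectivity of $\calT(\iota)$ — here one uses that $\calT$, being a composition of finite-support, exponentiation and product functors, preserves injections — one deduces that $\calT(q) \compose \theta$ also factors through~$q$, so $\calT(q)(\theta(x)) = \calT(q)(\theta(y))$ when $q(x) = q(y)$. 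By the equivalence established in the first part (the lifting lemma, read in the other direction), $\calT(q)(\theta(x)) = \calT(q)(\theta(y))$ is precisely $\calT(E)(\theta(x),\theta(y))$. Hence $E$ is a $\FuTS$ bisimulation containing the pair, so $x \simeq_\calX y$.

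\medskip

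\noindent
\textbf{The main obstacle.} The bulk of the work — and the only genuinely non-routine step — is the inductive lifting lemma tying $\lift(E,\calR_1,\dots,\calR_m)^{\mkern1mu\calL}$ together with the product of the $\lift$'s over the $n$ components to the functor action $\calT(q)$, in particular getting the fiber/equivalence-class bookkeeping right across the levels of nesting; once that correspondence is in place, both inclusions are short diagram chases using finality. A secondary point requiring care is the claim that $\calT$ preserves injective maps (used in the second direction): $\fsfn{f}{\calR}$ preserves injectivity because on an injective $f$ the fibers are singletons, so $\fsfn{f}{\calR}(\varphi)$ determines $\varphi$; exponentiation and finite products preserve injectivity trivially; composition then does too. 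I would state the lifting lemma separately and prove it first, then treat the two inclusions.
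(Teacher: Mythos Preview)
Your proof is correct and follows essentially the same approach as the paper: both directions rest on the inductive lifting lemma identifying $\lift(E,\calR_1,\dots,\calR_m)$ with the kernel of $\fsfn{q}{\calR_1,\dots,\calR_m}$, which the paper states and proves in exactly the form you propose. The only organizational difference is in the backward direction: the paper applies the lifting lemma directly with the final map $\fmorph{\calT}{\calX}{{\cdot}}$ in place of the quotient~$q$ (its Lemma~2), thereby sidestepping your factorization $h = \iota \compose q$ and the auxiliary check that $\calT$ preserves injections---but this is a minor variation, not a different idea.
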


\blankline

\noindent
A restricted version of Theorem~\ref{th-futs-bisim-beh-equiv} was
given in~\cite{LMV12:accat,LMV13}. The present theorem generalizes the
result to also deal with nesting, a situation needed for the more
advanced quantitative automata discussed in the sequel. The proof of
the theorem is built on two lemmas.  To smooth the presentation, we
consider the lemmas only for non-product functors (i.e.\ choosing $ n=
1$ in Definition~\ref{df-futs-type}). The extension to product
functors is conceptually straightforward. 

Recall, for $f : X \rightarrow Y$, the functor application
$\fsfn{{\ldots} \fsfn{f}{\calR_{1}} {\, \ldots \,} }{\calR_{n}
  \mkern-3mu}^{\mkern1mu \calL} $ to~$f$ is a function from the set
$\fsfn{X, \calR_{1}, \ldots}{\calR_{n} \mkern-3mu}^{\mkern1mu \calL}$
to the set $ \fsfn{Y, \calR_{1}, \ldots}{\calR_{n}
  \mkern-3mu}^{\mkern1mu \calL}$ with $\fsfn{f}{\calR_1}(\varphi)(y) =
\tssum_{x \in f^{-1}(y)} \: \varphi(x)$ for $\varphi \in
\fsfn{X}{\calR_1}$ and $y\in Y$, and
\begin{displaymath}
  \fsfn{{\ldots} \fsfn{f}{\calR_{1}} {\, \ldots \,} }{\calR_{n}
  \mkern-3mu}(\Phi)(\ell)(\psi) = 
    \tssum_{\varphi \in \fsfn{{\ldots} \fsfn{f}{\calR_{1}} {\, \ldots
          \,} }{\calR_{n-1} \mkern-3mu}^{-1}(\psi)}\: \Phi(\ell)(\phi) 
\end{displaymath}
for $\Phi \in \fsfn{X, \calR_{1}, \ldots}{\calR_{n}
  \mkern-3mu}^{\mkern1mu \calL}$, $\ell \in \calL$, $\psi \in \fsfn{Y,
  \calR_{1}, \ldots}{\calR_{n-1} \mkern-3mu}$, and resulting value
in~$\calR_{n}$.


\blankline

\begin{lemma}
  \label{lm-nesting}
  Let $\calX = ( X , \theta )$ be a $\FuTS$ of type $\calT = \fsfn{
    {\ldots} \fsfn{\cdot}{\calR_{1}} {\, \ldots \,} }{\calR_{n}
    \mkern-3mu}^{\mkern1mu \calL} $. If $E$~is an equivalence relation
  on~$X$, then there exists a mapping $\theta_E \colon X/E \to \fsfn{
    X/E , \calR_1 , \ldots {} }{\calR_n}^{\mkern1mu \calL} $ such that
  $\calX_E = (X/E, \theta_E )$ is a coalgebra of the functor~$\calT$,
  and the canonical mapping $\varepsilon : X \to X/E$ is a
  $\calT$-homomorphism.
\end{lemma}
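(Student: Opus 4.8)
The plan is to obtain $\theta_E$ by pushing $\theta$ through the projection $\varepsilon$ and then to read the homomorphism identity $\calT(\varepsilon) \compose \theta = \theta_E \compose \varepsilon$ off the construction. Since $\calT = \fsfn{ {\ldots} \fsfn{\cdot}{\calR_1} {\, \ldots \,} }{\calR_n}^{\mkern1mu \calL}$ is a tower of finite-support functors topped by an exponential, everything rests on understanding how each layer transforms $\varepsilon$. Accordingly I would first fix the layerwise maps $\varepsilon_0 = \varepsilon : X \to X/E$ and, inductively, $\varepsilon_k = \fsfn{\varepsilon_{k-1}}{\calR_k} : \fsfn{X}{\calR_1,\ldots,\calR_k} \to \fsfn{X/E}{\calR_1,\ldots,\calR_k}$, so that $\calT(\varepsilon) = \varepsilon_n^{\calL}$ acts label by label through $\varepsilon_n$, each $\varepsilon_k$ being given by the summation formula recalled just before the lemma.

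The heart of the argument, and the step I expect to be the main obstacle, is an induction on the nesting depth $k$ showing two things at once: that $\varepsilon_k$ is surjective, and that its kernel is exactly the lifted relation, i.e.\ $\varepsilon_k(\varphi) = \varepsilon_k(\psi) \iff \lift(E,\calR_1,\ldots,\calR_k)(\varphi,\psi)$ for $\varphi,\psi \in \fsfn{X}{\calR_1,\ldots,\calR_k}$. This uses only that $E$ is an equivalence relation; the base case $k=0$ is the defining property $\varepsilon(x) = \varepsilon(y) \iff E(x,y)$. For the step I would unfold $\varepsilon_{k+1}(\Phi)(\zeta) = \tssum_{\varphi \in \varepsilon_k^{-1}(\zeta)} \Phi(\varphi)$; surjectivity of $\varepsilon_k$ together with the kernel hypothesis say that, as $\zeta$ ranges over $\fsfn{X/E}{\calR_1,\ldots,\calR_k}$, the fibres $\varepsilon_k^{-1}(\zeta)$ range over exactly the equivalence classes of $\lift(E,\calR_1,\ldots,\calR_k)$, whence $\varepsilon_{k+1}(\Phi)(\zeta) = \Phi[C]$ for the class $C$ matching $\zeta$. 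Comparing $\Phi$ and $\Psi$ then yields $\varepsilon_{k+1}(\Phi) = \varepsilon_{k+1}(\Psi)$ iff $\Phi[C] = \Psi[C]$ for every class $C$, which is by definition $\lift(E,\calR_1,\ldots,\calR_k,\calR_{k+1})(\Phi,\Psi)$, closing the induction. The delicate points are the matching of fibres to classes across adjacent levels and the appeal to finiteness of support that keeps the sums $\Phi[C]$ well defined.

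With the kernel characterisation in place, $\varepsilon_n$ exhibits $\fsfn{X/E}{\calR_1,\ldots,\calR_n}$ as the quotient of $\fsfn{X}{\calR_1,\ldots,\calR_n}$ by $\lift(E,\calR_1,\ldots,\calR_n)$, so I would set $\theta_E$ to be the map induced on $E$-classes, $\theta_E([x]_E)(\ell) = \varepsilon_n(\theta(x)(\ell))$. By construction this is precisely the identity $\theta_E \compose \varepsilon = \calT(\varepsilon) \compose \theta$, which is at once the statement that $\theta_E$ lands in $\fsfn{X/E}{\calR_1,\ldots,\calR_n}^{\calL}$, making $\calX_E = (X/E,\theta_E)$ a $\calT$-coalgebra, and that $\varepsilon$ is a $\calT$-homomorphism. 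The one clause still to check is that the assignment does not depend on the representative of $[x]_E$: for $E(x,y)$ one needs $\varepsilon_n(\theta(x)(\ell)) = \varepsilon_n(\theta(y)(\ell))$ for all $\ell$, which by the kernel characterisation is exactly $\lift(E,\calR_1,\ldots,\calR_n)^{\calL}(\theta(x),\theta(y))$ — the condition the lifting is engineered to record, and the single point at which the interaction between $E$ and the transition structure is invoked, everything else having been carried out for an arbitrary equivalence relation. Finally, for the full product type of Definition~\ref{df-futs-type} nothing changes conceptually: the layerwise induction is run separately in each factor and the homomorphism square for $\calT$ is the conjunction of the squares for the individual components, as announced in the remark preceding the lemma.
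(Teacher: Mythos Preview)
Your approach coincides with the paper's: the paper also proves, by induction on the nesting level~$i$, the kernel characterisation
\[
  \fsfn{\varepsilon,\calR_1,\ldots}{\calR_i}(z_i)=\fsfn{\varepsilon,\calR_1,\ldots}{\calR_i}(z'_i)
  \iff
  \lift(E,\calR_1,\ldots,\calR_i)(z_i,z'_i),
\]
and then defines $\theta_E([x]_E)(\ell)(\bar z_{n-1})=\tssum_{z_{n-1}\in\varepsilon_{n-1}^{-1}(\bar z_{n-1})}\theta(x)(\ell)(z_{n-1})$, which is exactly your $\varepsilon_n(\theta(x)(\ell))$ unfolded at the top layer. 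Your explicit inclusion of surjectivity of $\varepsilon_k$ in the induction is a useful sharpening the paper leaves implicit.

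There is one genuine point you have spotted but not resolved. You correctly observe that well-definedness of $\theta_E$ on $E$-classes amounts to $E(x,y)\Rightarrow\lift(E,\calR_1,\ldots,\calR_n)^{\calL}(\theta(x),\theta(y))$, and you call this ``the condition the lifting is engineered to record''. But that condition is precisely the transfer clause of Definition~\ref{df-futs-bisimulation}: it holds only when $E$ is a $\FuTS$ bisimulation, not for an arbitrary equivalence relation as the lemma's hypothesis states. The paper's sketch is silent on well-definedness, so this gap is shared; in the intended use of the lemma (the paragraph following it, establishing one half of Theorem~\ref{th-futs-bisim-beh-equiv}) $E$ \emph{is} a bisimulation, so nothing goes wrong there. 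You should, however, make the dependence explicit: either restate the lemma with the bisimulation hypothesis, or note in your write-up that the construction of $\theta_E$ as a genuine map on $X/E$---and hence the homomorphism property of $\varepsilon$---requires this extra assumption.
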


\begin{proof}[Sketch of proof.]
  Define a coalgebra structure~$\thetaE$ on~$X/E$ by putting
  \begin{displaymath}
    \thetaE( [x]_E )(\myell)(\zbar_{n-1})
    =
    \tssum_{z_{n-1} \in \fsfn{\varepsilon, \calR_1 , \ldots
      }{\calR_{n-1}}^{-1} ( \zbar_{n-1} )} \:
    \theta(x)(\myell)(z_{n-1})
  \end{displaymath}
  for~$x \in X$, $\ell \in \calL$, $\zbar_{n-1} \in \fsfn{X/E,
    \calR_1, \ldots }{\calR_{n-1}}$. 
  The property
  \begin{displaymath}
    \fsfn{\varepsilon, \calR_1, \ldots }{\calR_i}(z_i)
    =
    \fsfn{\varepsilon, \calR_1, \ldots }{\calR_i}(z'_i)
    \iff
    \lift(E, \calR_1, \ldots \calR_i)(z_i,z'_i)
  \end{displaymath}
  for $z_i, z'_i \in \fsfn{X, \calR_1, \ldots }{\calR_i}$, $i = 1,
  \ldots , n$, can be proved by induction on~$i$. Then, the property
  for~$n$ is the key ingredient to verify that $\fsfn{\varepsilon,
    \calR_1, \ldots }{\calR_n}^{\mkern1mu \calL} \compose \theta =
  \thetaE \compose \varepsilon$, i.e.\ $\varepsilon$~is a
  $\calT$-homomorphism.
\end{proof}

\blankline

\noindent
From the lemma it follows that the final mapping
$\fmorph{\calT}{\calX}{{\cdot}} : X \to \Omega$ factorizes
through~$\varepsilon$. Hence, if $\varepsilon(x) = \varepsilon(y)$,
then $x \approxT y$, proving half of
Theorem~\ref{th-futs-bisim-beh-equiv}. The reverse can be shown using
the following result.

\blankline

\begin{lemma}
  Let $\calX = ( X , \theta )$ be a $\FuTS$ of type $\calT = \fsfn{
    {\ldots} \fsfn{\cdot}{\calR_{1}} {\, \ldots \,} }{\calR_{n}
    \mkern-3mu}^{\mkern1mu \calL} $.  
  The relation $\approxT$ on~$X$
  is a bisimulation for the $\FuTS$~$\calX$.
\end{lemma}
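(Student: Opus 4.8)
The plan is to identify $\approxT$ with the kernel of the final $\calT$-homomorphism and, at the same time, to recognise the lifted relation $\lift(\approxT, \calR_1, \ldots , \calR_n)^{\mkern1mu \calL}$ as the kernel of $\calT$ applied to that same map. Write $f = \fmorph{\calX}{\calT}{{\cdot}} : X \to \Omega$ for the unique homomorphism into the final coalgebra $\bfOmega = ( \Omega , \omega )$, so that $\calT(f) \compose \theta = \omega \compose f$. Since $x \approxT y$ holds exactly when $f(x) = f(y)$, the relation $\approxT$ is the kernel of a function, hence an equivalence relation on~$X$, and it only remains to establish the defining implication of Definition~\ref{df-futs-bisimulation}: that $f(x) = f(y)$ implies $\lift(\approxT, \calR_1, \ldots , \calR_n)^{\mkern1mu \calL}( \theta(x) , \theta(y) )$.

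The heart of the argument is the identity
\begin{displaymath}
  \lift(\approxT, \calR_1, \ldots , \calR_n)^{\mkern1mu \calL}
  \; = \;
  \ker \calT(f)
\end{displaymath}
of relations on $\calT X = \fsfn{X, \calR_1, \ldots}{\calR_n}^{\mkern1mu \calL}$. To prove it I would factor $f$ as $f = m \compose \varepsilon$, with $\varepsilon : X \to X/{\approxT}$ the canonical quotient and $m : X/{\approxT} \to \Omega$, $[x]_{\approxT} \mapsto f(x)$, which is well-defined and injective. Functoriality gives $\calT(f) = \calT(m) \compose \calT(\varepsilon)$; as $\calT$ is a composition of finite-support functors $\fsfn{\cdot}{\calR}$, exponentiations $({\cdot})^{\mkern1mu \calL}$ and products, each of which preserves injections (immediate from the definitions), the map $\calT(m)$ is injective, and therefore $\ker \calT(f) = \ker \calT(\varepsilon)$. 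I would then invoke the inductive property established inside the proof of Lemma~\ref{lm-nesting}, now for the equivalence relation $\approxT$: for $i = 1 , \ldots , n$ it gives $\fsfn{\varepsilon, \calR_1, \ldots}{\calR_i}(z) = \fsfn{\varepsilon, \calR_1, \ldots}{\calR_i}(z')$ iff $\lift(\approxT, \calR_1, \ldots , \calR_i)(z,z')$; taking $i = n$ and reading this off componentwise over $\ell \in \calL$ yields $\ker \calT(\varepsilon) = \lift(\approxT, \calR_1, \ldots , \calR_n)^{\mkern1mu \calL}$, which completes the identity. This step uses only that $\approxT$ is an equivalence relation, so no circularity with Lemma~\ref{lm-nesting} arises.

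Granting the identity, the lemma follows immediately. If $x \approxT y$, then $f(x) = f(y)$, hence $\omega( f(x) ) = \omega( f(y) )$, and the homomorphism equation $\calT(f) \compose \theta = \omega \compose f$ gives $\calT(f)( \theta(x) ) = \calT(f)( \theta(y) )$, i.e.\ $( \theta(x) , \theta(y) ) \in \ker \calT(f) = \lift(\approxT, \calR_1, \ldots , \calR_n)^{\mkern1mu \calL}$, which is exactly the required implication. As in the rest of the development, the passage from this non-product case to types with several components is purely componentwise and introduces nothing new.

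I expect the delicate point to be the identity $\lift(\approxT, \calR_1, \ldots , \calR_n)^{\mkern1mu \calL} = \ker \calT(f)$, and in particular the step that trades the quotient map $\varepsilon$ for the genuine final map $f$; this is where preservation of monomorphisms by the constituent functors of $\calT$ is needed. Once that bridge is in place, the remainder is just a chase of the coalgebra-homomorphism equation, much as in the restricted setting of~\cite{LMV12:accat,LMV13}.
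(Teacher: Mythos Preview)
Your argument is correct and follows essentially the paper's own line: both hinge on identifying $\lift(\approxT,\calR_1,\ldots,\calR_n)^{\mkern1mu\calL}$ with the kernel of $\calT(f)$ for the final map $f=\fmorph{\calT}{\calX}{{\cdot}}$ (the paper phrases this as the inductive property $\lift(\approxT,\calR_1,\ldots,\calR_i)=\ker\fsfn{f,\calR_1,\ldots}{\calR_i}$) and then reading the transfer condition off the homomorphism equation $\calT(f)\compose\theta=\omega\compose f$. The one difference is purely organizational: the paper reproves the kernel identification for~$f$ by a fresh induction on~$i$, whereas you recycle the version for the quotient~$\varepsilon$ already obtained inside Lemma~\ref{lm-nesting} and bridge to~$f$ via the factorization $f=m\compose\varepsilon$ together with preservation of injections by the constituent functors of~$\calT$---a tidy shortcut, but not a different idea.
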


\begin{proof}[Sketch of proof.]
  One first shows, by induction on~$i$,
  \begin{displaymath}
    y_i \in [x_i]_{\lift({\approxT}, \calR_1 , \ldots, \calR_i)} 
    \iff
    \fsfn{\fmorph{\calT}{\calX}{{\cdot}}, \calR_1 , \ldots
    }{\calR_i}(x_i) =
    \fsfn{\fmorph{\calT}{\calX}{{\cdot}}, \calR_1 , \ldots
    }{\calR_i}(y_i)
  \end{displaymath}
  for $x_i, y_i \in {\fsfn{X, \calR_1 , \ldots }{\calR_i}}$, and $i = 
  1, \ldots , n$. 
  Using the above property for~$n$, one next verifies
  \begin{displaymath}
    \omega( \mkern1mu \fmorph{\calT}{\calX}{x} \mkern1mu )(\myell)(w_{n-1}) 
    =
    \tssum_{z_{n-1} \in \fsfn{\fmorph{\calT}{\calX}{{\cdot}}, \calR_1
        , \ldots }{\calR_{n-1}}^{-1}(w_{n-1})} \:
    \theta(x)(\myell)(z_{n-1}) 
  \end{displaymath}
  for $x \in X$, $\ell \in \calL$, and $w_{n-1} \in \fsfn{\Omega,
    \calR_1, \ldots}{\calR_{n-1}}$.  From this the `transfer
  condition' of Definition~\ref{df-futs-bisimulation} for~$\approxT$
  follows.
\end{proof}

\blankline

\noindent
By the lemma, if $x \approxT y$, then there exists a bisimulation for
the $\FuTS$~$\calX$, viz.\ the bisimulation~$\approxT$, that relates
$x$ and~$y$. This proves the other direction of the correspondence
theorem.

In the next two sections, we proceed to incorporate the major
operational models used for the semantics of quantitative process
languages in the $\FuTS$
framework. Theorem~\ref{th-futs-bisim-beh-equiv} confirms that $\FuTS$
bisimulation is a proper notion of process equivalence. Thus, as a
consequence, a notion of process equivalence that coincides with
$\FuTS$ bisimulation also coincides with its coalgebraic counterpart.


\section{Quantitative transition systems as simple and 
  combined $\FuTS$}
\label{sec:simple}

In this section we interpret standard labeled transition systems,
continuous-time Markov chains and Hermann's interactive Markov chains
as $\FuTS$ and show that their usual notion of bisimulation coincides
with the notion of bisimulation of their associated~$\FuTS$.

\subsection{Labeled transition systems}
\label{sub:lts}

\noindent
It is straightforward to see that an~$\LTS$, over a set of
actions~$\calA$ and with a set of states~$S$, can be modeled as a
function $S \times \calA \to ( S \to \bools)$. However,
arbitrary~$\LTS$ do not fit in our set-up with finitely supported
functions. So, our modeling of~$\LTS$ here, similar as reported
elsewhere, e.g.\ \cite{KS08:fossacs,BBBRS12}, restricts to 
image-finite~$\LTS$.

\blankline

\begin{definition}
 Fix a set ~$\calA$  of actions.
  \begin{itemize}
  \item [(a)] An image-finite $\LTS$ over~$\calA$ is a pair $\calL = (
    S , {\trans{}{\calL}} )$ where $S$ is a set of states, and
    ${\trans{}{\calL}} \subseteq S \times \calA \times S$ is the
    transition relation such that, for all $s \in S$, $a \in \calA$,
    the set $\lc s' \mid s \trans{a}{\calL} s' \rc$ is finite.
  \item [(b)] An equivalence relation $R \subseteq S \times S$ is
    called a bisimulation equivalence for the $\LTS$~$\calL = ( S ,
    {\trans{}{\calL}} )$ if, for all $s,s',t \in S$, $a \in \calA$
    such that $R(s,t)$ and $s \trans{a}{\calL} s'$, there exists $t'
    \in S$ such that $t \trans{a}{\calL} t'$ and $R(s',t')$.
  \item [(c)] Two states $s,t \in S$ in an $\LTS$~$\calL = ( S ,
    {\trans{}{\calL}} )$ are called strongly bisimilar for
    $\LTS$~$\calL$, if there exists a bisimulation equivalence~$R$
    for~$\calL$ such that~$R(s,t)$. 
  \end{itemize}
\end{definition}

\blankline

\noindent
An image-finite $\LTS$ $\calL = ( S , {\trans{}{\calL}} )$
over~$\calA$ induces a simple $\FuTS$ $\calF(\calL) = ( S , \thetaL )$
over~$\calA$ and the semiring~$\bools$, if we define $\thetaL : S \to
\fsfn{S}{\bools}^{\mkern1mu \calA}$ by $\thetaL(s)(a)(s') \iff s
\trans{a}{\calL} s'$, for all $s, s' \in S$, $a \in \calA$. The next
theorem will not come as an surprise. See~\cite{KS08:fossacs,BBBRS12}
for example, for a proof that the respective notions of bisimulation
coincide for this specific case. In order to illustrate the general
pattern of such a proof for~$\FuTS$ we provide a proof here as well.

\blankline

\begin{theorem}
  \label{th-correspondence-lts}
  Let $\calL = ( S , {\trans{}{\calL}} )$ be an $\LTS$. Then it holds
  that $R$~is a bisimulation equivalence iff $R$~is a $\FuTS$
  bisimulation for~$\calF(\calL)$.
\end{theorem}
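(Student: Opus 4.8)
The plan is to unfold both definitions and observe that, once summation in the Boolean semiring~$\bools$ is recognised as existential quantification, the two transfer conditions say literally the same thing.

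First I would specialise Definition~\ref{df-futs-bisimulation} to the simple $\FuTS$~$\calF(\calL) = (S,\thetaL)$, for which $n = m_1 = 1$ and $\lift(E)$ reduces to~$E$. An equivalence relation~$R$ on~$S$ is then a $\FuTS$ bisimulation for~$\calF(\calL)$ exactly when
\begin{displaymath}
  R(s,t) \implies
  \thetaL(s)(a)[B] = \thetaL(t)(a)[B]
  \quad \text{for all } a \in \calA \text{ and } B \in S/R .
\end{displaymath}
Since $\trans{}{\calL}$ is image-finite, each $\thetaL(s)(a)$ lies in $\fsfn{S}{\bools}$, and because addition in~$\bools$ is disjunction we have $\thetaL(s)(a)[B] = \TRUE$ if and only if $s \trans{a}{\calL} s'$ for some $s' \in B$. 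So the $\FuTS$ transfer condition for~$R$ becomes: whenever $R(s,t)$, then for every action~$a$ and every $R$-class~$B$, state~$s$ has an $a$-successor inside~$B$ iff state~$t$ does.

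For the implication from bisimulation equivalence to $\FuTS$ bisimulation I would take~$R$ a bisimulation equivalence, assume $R(s,t)$, and fix $a \in \calA$ and $B \in S/R$. If $\thetaL(s)(a)[B] = \TRUE$, choose $s' \in B$ with $s \trans{a}{\calL} s'$; the classical transfer condition supplies $t'$ with $t \trans{a}{\calL} t'$ and $R(s',t')$, whence $t' \in \Rclass{s'} = B$ and $\thetaL(t)(a)[B] = \TRUE$. The opposite inclusion is obtained symmetrically by applying the transfer condition to $R(t,s)$, which holds since~$R$ is symmetric; thus $\thetaL(s)(a)[B] = \thetaL(t)(a)[B]$, as required. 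For the converse, let~$R$ be a $\FuTS$ bisimulation — so in particular an equivalence — and suppose $R(s,t)$ and $s \trans{a}{\calL} s'$. Taking $B = \Rclass{s'}$ gives $\thetaL(s)(a)[B] = \TRUE$, hence $\thetaL(t)(a)[B] = \TRUE$ by the $\FuTS$ condition, so there is $t' \in B$ with $t \trans{a}{\calL} t'$; and $t' \in \Rclass{s'}$ means $R(s',t')$, which is exactly the matching transition demanded of a bisimulation equivalence.

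The one point that needs a little care — rather than being a genuine obstacle — is that the classical transfer condition is one-directional (a move of~$s$ must be matched by a move of~$t$) while the $\FuTS$ condition is an equality, hence symmetric; the gap is closed using precisely the symmetry of the equivalence relation~$R$ that both notions already presuppose. Everything else is a routine unfolding of the lifting operator $\lift({\cdot},\bools)$ and of its pointwise extension $\lift({\cdot},\bools)^{\calA}$, combined with the identification of Boolean summation over a finite successor set with existential quantification.
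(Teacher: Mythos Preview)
Your proposal is correct and follows essentially the same approach as the paper's proof: both unfold the definition of $\lift(R,\bools)^{\calA}$, identify Boolean summation over an $R$-class with existential quantification, and then match the resulting condition against the classical transfer clause, using symmetry of~$R$ to bridge the one-directional clause with the symmetric equality. The paper packages this as a single chain of logical equivalences whereas you spell out the two implications with explicit witnesses, but the argument is the same.
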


\begin{proof}
  The result follows almost directly from the definitions. We have,
  for $s,t \in S$,
  \begin{displaymath}
    \def\arraystrech{1.3}
    \begin{array}{@{}rcll@{}}
      \multicolumn{4}{l}{\lift(R,\bools)^{\mkern1mu \calA} (s,t)}
      \\ \quad & \iff &
      \forall \mkern1mu a \in \calA \mkern1mu
      \forall \mkern1mu C \in S/R \colon
      \tssum_{u \in C} \: \thetaL(s)(a)(u)
      =
      \tssum_{u \in C} \: \thetaL(t)(a)(u)
      & (\text{definition $\lift(R,\bools)^{\mkern1mu \calA}$})
      \\ & \iff &
      \forall \mkern1mu a \in \calA \mkern1mu
      \forall \mkern1mu C \in S/R \colon
      \exists \mkern1mu u \in C \colon s \trans{a}{\calL} u
      \Leftrightarrow
      \exists \mkern1mu u \in C \colon t \trans{a}{\calL} u
      & (\text{definition $\thetaL$})
      \\ & \iff &
      \forall \mkern1mu a \in \calA \mkern1mu
      \forall \mkern1mu s' \in S \colon
      s \trans{a}{\calL} s'
      \Rightarrow
      \exists \mkern1mu t' \in S \colon
      t \trans{a}{\calL} t' \land R(s',t')
      & (\text{$R$ is an equivalence relation})
    \end{array}
    \def\arraystrech{1.0}
  \end{displaymath}
  We use the logical equivalence from left to right in proving that a
  bisimulation for~$\calF(\calL)$ is a bisimulation equivalence, and the
  logical equivalence the other way around in proving that a
  bisimulation equivalence is a $\FuTS$ bisimulation.
\end{proof}

\blankline

\noindent
With appeal to the correspondence result,
Theorem~\ref{th-futs-bisim-beh-equiv}, we retrieve that strong
bisimulation and behavioral equivalence coincide.

\subsection{Continuous-time Markov chains}
\label{sub:ctmc}

As a first, basic example of a quantitative semantic model we consider
continuous-time Markov chains ($\CTMC$) and the notion of
lumpability. In its purest form, a~$\CTMC$ does not involve
actions. It can be viewed as connecting a state to a number of other
states while weighing the connection with a real number, viz.\ the
rate of the negative exponential distribution used to represent the
time associated with the transition. As for our treatment of~$\LTS$ we
need to restrict to image-finiteness here too, which amounts to finite
branching.

\blankline

\begin{definition}
  [cf.~\cite{BHKW06:entcs}] \mbox{}
  \begin{itemize}
  \item [(a)] A $\CTMC$ is a pair $\calC = ( S , {\trans{}{\calC}} )$
    where $S$ is a set of states, and ${\trans{}{\calC}} \subseteq S
    \times \nnreals \times S$ is the transition relation. Define
    $\bfR(s,s') = \tssum \lc \lambda \mid s \trans{\lambda}{\calC} s'
    \rc$ and $\bfR(s,C) = \tssum \lc \bfR(s,s') \mid s' \in C \rc$.
  \item [(b)] An equivalence relation $R \subseteq S \times S$ is
    called a lumping relation for the $\CTMC$~$\calC = ( S ,
    {\trans{}{\calC}} )$ if, for all $s,t \in S$, such that~$R(s,t)$
    it holds that $\bfR(s,C) = \bfR(t,C)$ for every equivalence
    class~$C$ of~$R$.
  \item [(c)] Two states $s,t \in S$ in a $\CTMC$~$\calC = ( S ,
    {\trans{}{\calC}} )$ are called lumping equivalent, if there
    exists a lumping relation~$R$ for~$\calC$ such that~$R(s,t)$.
  \end{itemize}
\end{definition}

\blankline

\noindent
A $\CTMC$ $\calC = ( S , {\trans{}{\calC}} )$ induces a simple $\FuTS$
$\calF(\calC) = ( S , \thetaC )$ over the label set~$\Delta =
\singleton{\delta}$ and the semiring~$\nnreals$, if we define $\thetaC
: S \to \fsfn{S}{\nnreals}^{\mkern1mu \Delta}$ by
\begin{displaymath}
  \thetaC(s)(\delta)(s') = 
  \tssum \lc \lambda \mid s \trans{\lambda}{\calC} s' \rc
\end{displaymath}
for $s, s' \in S$. Here, $\Delta$ is a dummy set to help $\CTMC$ fit
in the format of~$\FuTS$, cf.~\cite{BDL13:ic,DLLM13:cs}; conventionally, the
label~$\delta$ signifies delay.

\blankline

\begin{theorem}
  \label{th-correspondence-ctmc}
  Let $\calC = ( S , {\trans{}{\calC}} )$ be a $\CTMC$ and $R
  \subseteq S \times S$ an equivalence relation. Then it holds that
  $R$~is a lumping iff $R$~is a $\FuTS$ bisimulation
  for~$\calF(\calC)$.
\end{theorem}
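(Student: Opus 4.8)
The plan is to mimic the chain of logical equivalences used in the proof of Theorem~\ref{th-correspondence-lts}, unwinding the definition of $\FuTS$ bisimulation for the simple $\FuTS$ $\calF(\calC) = ( S , \thetaC )$ and checking that it collapses verbatim to the lumping condition. Since $\calF(\calC)$ has $n = 1$, $m_1 = 1$, label set $\Delta = \singleton{\delta}$ and semiring $\nnreals$, Definition~\ref{df-futs-bisimulation} tells us that an equivalence relation $R$ on~$S$ is a $\FuTS$ bisimulation for~$\calF(\calC)$ exactly when $R(s,t)$ implies $\lift(R,\nnreals)^{\mkern1mu \Delta}( \thetaC(s) , \thetaC(t) )$ for all $s,t \in S$.

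First I would expand $\lift(R,\nnreals)^{\mkern1mu \Delta}( \thetaC(s) , \thetaC(t) )$ using the definition of the lifting of~$R$ to the function space~$\fsfn{S}{\nnreals}$ together with the fact that $\Delta$ is a singleton, obtaining: for every class $C \in S/R$, $\tssum_{u \in C} \thetaC(s)(\delta)(u) = \tssum_{u \in C} \thetaC(t)(\delta)(u)$. Here I would note that $\thetaC(s)(\delta) \in \fsfn{S}{\nnreals}$ has finite support --- this is where image-finiteness of the $\CTMC$ enters --- so these sums are genuinely finite and well-defined in~$\nnreals$.

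Next I would substitute the definition $\thetaC(s)(\delta)(u) = \tssum \lc \lambda \mid s \trans{\lambda}{\calC} u \rc$ and observe that $\tssum_{u \in C} \tssum \lc \lambda \mid s \trans{\lambda}{\calC} u \rc = \tssum_{u \in C} \bfR(s,u) = \bfR(s,C)$, which is just the two-stage definition of~$\bfR$ from the definition of~$\CTMC$. Hence $\lift(R,\nnreals)^{\mkern1mu \Delta}( \thetaC(s) , \thetaC(t) )$ is equivalent to ``$\bfR(s,C) = \bfR(t,C)$ for all $C \in S/R$'', i.e.\ precisely to the lumping condition. Reading this equivalence from right to left shows that every lumping is a $\FuTS$ bisimulation, and from left to right that every $\FuTS$ bisimulation is a lumping; with Theorem~\ref{th-futs-bisim-beh-equiv} one then also retrieves that lumping equivalence coincides with behavioral equivalence.

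There is essentially no hard step: the only point needing a little care is the reordering of the finite double sum $\tssum_{u \in C} \tssum \lc \lambda \mid s \trans{\lambda}{\calC} u \rc$ so as to recognise it as~$\bfR(s,C)$, which is immediate once the definition of~$\bfR$ is unfolded, with commutativity of~$+$ on~$\nnreals$ and finite support keeping everything well-defined. Note that, in contrast to the $\LTS$ case, the equivalence-relation property of~$R$ is not needed to pass between the quantitative condition and the lumping condition; it is used only --- as required in Definition~\ref{df-futs-bisimulation} --- to guarantee that $\lift(R,\nnreals)^{\mkern1mu \Delta}$ is an equivalence relation on $\fsfn{S}{\nnreals}^{\mkern1mu \Delta}$, which the hypothesis on~$R$ supplies.
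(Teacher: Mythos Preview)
Your proposal is correct and follows essentially the same approach as the paper: both proofs unfold the definition of $\lift(R,\nnreals)^{\mkern1mu \Delta}$, substitute the definition of~$\thetaC$, and recognise the resulting condition as $\bfR(s,C) = \bfR(t,C)$ for all $C \in S/R$. Your additional remarks on finite support and on the (non-)use of the equivalence-relation hypothesis are accurate but not needed for the argument itself; the concluding appeal to Theorem~\ref{th-futs-bisim-beh-equiv} is something the paper makes outside the proof, so it is fine but could be omitted from the proof proper.
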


\begin{proof}
  Also here the proof mainly consists of unfolding the various
  definitions. We have, for $s,t \in S$,
  \begin{displaymath}
    \def\arraystretch{1.3}
    \begin{array}{@{}rcll@{}}
      \multicolumn{4}{l}{\lift(R,\nnreals)^{\mkern1mu \Delta} (s,t)}
      \\ \quad & \iff &
      \forall \mkern1mu C \in S/R \colon
      \tssum_{u \in C} \: \thetaC(s)(\delta)(u)
      =
      \tssum_{u \in C} \: \thetaC(t)(\delta)(u)
      & (\text{definition $\lift(R,\nnreals)^{\mkern1mu \Delta}$})
      \\ & \iff &
      \forall \mkern1mu C \in S/R \colon
      \tssum \lc \lambda \mid s \trans{\lambda}{\calC} u ,\, u \in C \rc
      =
      \tssum \lc \mu \mid t \trans{\mu}{\calC} u ,\, u \in C
      \rc
      & (\text{definition $\thetaC$})
      \\ & \iff &
      \forall \mkern1mu C \in S/R \colon
      \bfR(s,C) = \bfR(t,C)
      & (\text{definition $\bfR$})
    \end{array}
    \def\arraystretch{1.0}
  \end{displaymath}
  Combing this logical equivalence with the respective definitions of
  lumping equivalence and bisimulation yields the result.
\end{proof}

\blankline

\noindent
For~$\CTMC$, from Theorem~\ref{th-futs-bisim-beh-equiv} and
elimination of the degenerated exponentiation with the singleton
set~$\Delta$, we obtain that lumping equivalence coincides with
behaviour equivalence of~$\fsfn{{\cdot}}{\nnreals}$.

\subsection{Interactive Markov chains}
\label{sub:imc}

\noindent
Interactive Markov chains ($\IMC$) were proposed
in~\cite{Her02:springer} as a reconciliation of $\LTS$
and~$\CTMC$. Because of this two-dimensionality $\IMC$ constitute a
prime example of a combined~$\FuTS$. The definition of an~$\IMC$ below
is taken from~\cite{HK10:fmco}.

\blankline

\pagebreak[3]

\begin{definition}
 Fix a set ~$\calA$  of actions.
  \begin{itemize}
  \item [(a)] An interactive Markov chain ($\IMC$) over~$\calA$ is a
    triple $\calI = ( S , {\trans{}{\calI}} , {\mtrans{}{\calI}} )$
    where $S$ is a set of states, ${\trans{}{\calI}} \subseteq S
    \times \calA \times S$ is the interactive transition relation, and
    ${\mtrans{}{\calI}} \subseteq S \times \nnreals \times S$ is the
    Markovian transition relation.
  \item [(b)] For states~$s,s' \in S$, action~$a \in \calA$, and a
    subset of states~$C \subseteq S$, define $\bfT(s,a,C)
    \Leftrightarrow s \trans{a}{\calI} \sbar$ for some state~$\sbar
    \in C$. Moreover, define $\bfR(s,s') = \tssum \lc \lambda \mid s
    \mtrans{\lambda}{\calI} s' \rc$ and $\bfR(s,C) = \tssum \lc
    \bfR(s,s') \mid s' \in C \rc$.
  \item [(c)] An equivalence relation $R \subseteq S \times S$ is
    called a bisimulation relation for the $\IMC$~$\calI = ( S ,
    {\trans{}{\calI}} , {\mtrans{}{\calI}} )$ if for all states~$s,t
    \in S$ and every equivalence class~$C$ of~$R$ the following holds:
    \begin{itemize}
    \item [(i)] $\bfT(s,a,C) = \bfT(t,a,C)$, for all $a \in \calA$;
    \item [(ii)] $\bfR(s,C) = \bfR(t,C)$.
    \end{itemize}
  \item [(c)] Two states $s,t \in S$ in an $\IMC$~$\calI = ( S ,
    {\trans{}{\calI}} , {\mtrans{}{\calI}} )$ are called bisimilar, if
    there exists a bisimulation relation~$R$ for~$\calI$
    with~$R(s,t)$. 
  \end{itemize}
\end{definition}

\blankline

\noindent
As for $\CTMC$ we use the symbol~$\delta$ to denote delay, and put
$\Delta = \singleton{\delta}$. An $\IMC$ $\calI = ( S ,
{\trans{}{\calI}} , {\mtrans{}{\calI}} )$ over~$\calA$ induces a
combined $\FuTS$ $\calF(\calI) = ( S , \thetaI )$, where $\thetaI = <
\thetaII , \thetaMI >$, over the label sets~$\calA$ and~$\Delta$ and
the semirings $\bools$ and~$\nnreals$. We define $\thetaII : S \to
\fsfn{S}{\bools}^\calA$ and $\thetaMI : S \to
\fsfn{S}{\nnreals}^\Delta$ by
\begin{displaymath}
  \thetaII(s)(a)(s') \Leftrightarrow s \trans{a}{\calI} s'
  \quad \text{and} \quad
  \thetaMI(s)(\delta)(s') = \bfR(s,s')
\end{displaymath}
for all $s, s' \in S$, $a \in \calA$. Thus, the transition
function~$\thetaII$ is similar to the transition function~$\thetaL$ of
an~$\LTS$. The transition function~$\thetaMI$ is similar to the
transition function~$\thetaC$ of a~$\CTMC$.

\blankline 

\noindent
The transition relation of a $\FuTS$ for an~$\IMC$ is the
superposition of those of an~$\LTS$ and a~$\CTMC$. Therefore, the
proof of a correspondence result of standard bisimulation and $\FuTS$
bisimulation for an~$\IMC$ combines the observations made in the
proofs of Theorems \ref{th-correspondence-lts}
and~\ref{th-correspondence-ctmc}. 

\blankline

\begin{theorem}
  Let $\calI = ( S , {\trans{}{\calI}} , {\mtrans{}{\calI}} )$ be
  an~$\IMC$ and $R \subseteq S \times S$ an equivalence relation. Then
  it holds that $R$~is a bisimulation for the $\IMC$~$\calI$ iff
  $R$~is a bisimulation for the $\FuTS$~$\calF(\calI)$.
\end{theorem}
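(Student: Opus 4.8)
The plan is to unfold the definition of $\FuTS$ bisimulation for the combined $\FuTS$ $\calF(\calI) = ( S, \thetaI )$, where $\thetaI = < \thetaII, \thetaMI >$, and show it splits into two independent conditions matching the two clauses of $\IMC$ bisimulation. Since $\calF(\calI)$ is a combined $\FuTS$ with $n = 2$, $m_1 = m_2 = 1$, Definition~\ref{df-futs-bisimulation} says $R$ is a $\FuTS$ bisimulation iff, whenever $R(s,t)$,
\begin{displaymath}
  \lift(R,\bools)^{\mkern1mu \calA} ( \thetaII(s), \thetaII(t) )
  \quad \text{and} \quad
  \lift(R,\nnreals)^{\mkern1mu \Delta} ( \thetaMI(s), \thetaMI(t) ).
\end{displaymath}
Here I use that the product relation $E_1 \times E_2$ holds of a pair of pairs iff each component relation holds, so the conjunction above is exactly what $\lift(R,\bools)^{\mkern1mu \calA} \times \lift(R,\nnreals)^{\mkern1mu \Delta}$ unfolds to on $\theta_\calI(s) = < \thetaII(s), \thetaMI(s) >$ and $\theta_\calI(t)$.

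Next I would treat each conjunct separately, reusing the chains of logical equivalences already established. For the first conjunct, the computation in the proof of Theorem~\ref{th-correspondence-lts} shows that $\lift(R,\bools)^{\mkern1mu \calA}(\thetaII(s),\thetaII(t))$ is equivalent to: for all $a \in \calA$ and all $C \in S/R$, $(\exists u \in C \colon s \trans{a}{\calI} u) \Leftrightarrow (\exists u \in C \colon t \trans{a}{\calI} u)$, which by definition of $\bfT$ is exactly clause~(i), $\bfT(s,a,C) = \bfT(t,a,C)$ for all $a \in \calA$. For the second conjunct, the computation in the proof of Theorem~\ref{th-correspondence-ctmc}, applied with $\thetaMI(s)(\delta)(s') = \bfR(s,s')$ in place of $\thetaC$, shows that $\lift(R,\nnreals)^{\mkern1mu \Delta}(\thetaMI(s),\thetaMI(t))$ is equivalent to: for all $C \in S/R$, $\bfR(s,C) = \bfR(t,C)$, i.e.\ clause~(ii). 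Taking the conjunction over all equivalence classes $C$ of $R$, these two equivalences together say precisely that $R$ satisfies clauses (i) and (ii) of the $\IMC$ bisimulation definition.

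Finally, I would assemble the two directions. If $R$ is a bisimulation for the $\IMC$~$\calI$, then for every $R(s,t)$ and every class $C$ both (i) and (ii) hold, so by the two equivalences above both conjuncts of the $\FuTS$ transfer condition hold, hence $R$ is a $\FuTS$ bisimulation for $\calF(\calI)$; conversely, a $\FuTS$ bisimulation for $\calF(\calI)$ yields both conjuncts, hence (i) and (ii) for every class~$C$, so $R$ is a bisimulation for~$\calI$. Since both notions additionally require $R$ to be an equivalence relation on $S$, and that hypothesis is shared, the two characterizations coincide. I expect no real obstacle here: the only thing to be slightly careful about is the orthogonality bookkeeping---confirming that the product functor's relation lifting genuinely decouples the $\calA$-labelled Boolean component from the $\Delta$-labelled real-valued component, so that the two earlier proofs can be invoked verbatim on the respective coordinates---but this is exactly the ``conceptually straightforward'' extension to product functors already flagged in the discussion following Theorem~\ref{th-futs-bisim-beh-equiv}.
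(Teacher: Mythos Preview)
Your proposal is correct and follows essentially the same approach as the paper: split the product relation into its two conjuncts, then handle the Boolean/$\calA$-labelled component and the real-valued/$\Delta$-labelled component separately exactly as in the $\LTS$ and $\CTMC$ correspondence proofs. The only cosmetic difference is that the paper writes out the full chain of equivalences inline rather than invoking Theorems~\ref{th-correspondence-lts} and~\ref{th-correspondence-ctmc} by reference, but the content and structure are the same.
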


\begin{proof}
  For an equivalence relation~$R$ and states $s,t \in S$ such
  that~$R(s,t)$ we have the following logical equivalence:
  \begin{displaymath}
    \def\arraystretch{1.3}
    \begin{array}{rclr}
      \multicolumn{4}{l}{\lift(R,\bools)^{\mkern1mu \calA} \times
        \lift(R,\nnreals)^{\mkern1mu \Delta} ( \thetaI(s) ,
        \thetaI(t) )}
      \\ \  & \Leftrightarrow &
      \lift(R,\bools)^{\mkern1mu \calA} \times
        \lift(R,\nnreals)^{\mkern1mu \Delta}( < \thetaII(s) ,
        \thetaMI(s) > , < \thetaII(t) , \thetaMI(t)> )
      & (\text{since $\thetaI = < \thetaII , \thetaMI >$})
      \\ & \Leftrightarrow &
      \lift(R,\bools)^{\mkern1mu \calA} ( \thetaII(s) , \thetaII(t) )
      \; \land \;
      \lift(R,\nnreals)^{\mkern1mu \Delta} ( \thetaMI(s) ,
      \thetaMI(t) )
      & (\text{definition of relational product})      
      \\ & \Leftrightarrow &
      \multicolumn{2}{l}{%
      \forall \mkern1mu a \in \calA
      \forall \mkern1mu C \in S/R \colon 
      \bigvee \lc \thetaII(s)(a)(u) \mid u \in C \rc
      =
      \bigvee \lc \thetaII(t)(a)(u) \mid u \in C \rc \; \land {}
      } 
      \\ & &
      \multicolumn{2}{l}{%
      \forall \mkern1mu C \in S/R \colon
      \tssum \lc \thetaMI(s)(\delta)(u) \mid u \in C \rc
      =
      \tssum \lc \thetaMI(t)(\delta)(u) \mid u \in C \rc
      } 
      \\
      \multicolumn{4}{r}{(\text{definition of
          $\lift(R,\bools)^{\mkern1mu \calA}$ and
          $\lift(R,\nnreals)^{\mkern1mu \Delta}$})}   
      \\ & \Leftrightarrow &
      \multicolumn{2}{l}{%
      \forall \mkern1mu a \in \calA
      \forall \mkern1mu C \in S/R \colon 
      \exists \mkern1mu u \in C : \thetaII(s)(a)(u) 
      \Leftrightarrow
      \exists \mkern1mu u \in C : \thetaII(t)(a)(u)  
      \; \land {}
      } 
      \\ & & 
      \forall \mkern1mu C \in S/R \colon 
      \tssum \lc \bfR(s,u) \mid u \in C \rc
      =
      \tssum \lc \bfR(t,u) \mid u \in C \rc
      & (\text{definition of sum on $\bools$ and~$\nnreals$})  
      \\ & \Leftrightarrow &
      \multicolumn{2}{l}{%
      \forall \mkern1mu a \in \calA
      \forall \mkern1mu C \in S/R \colon 
      \bfT(s,a,C) = \bfT(t,a,C)
      \; \land \;
      \forall \mkern1mu C \in S/R \colon 
      \bfR(s,C) = \bfR(t,C) 
      } 
      \\ & & 
      & (\text{definition of $\bfT$ and~$\bfR$})
    \end{array}
    \def\arraystretch{1.0}
  \end{displaymath}
  Thus, if $R$ is a $\FuTS$ bisimulation for~$\calI$, then $R(s,t)$
  implies $\lift(R,\bools)^{\mkern1mu \calA} \times
  \lift(R,\nnreals)^{\mkern1mu \Delta} ( \thetaI(s) , \thetaI(t)
  )$. Hence, for all $C \in S/R$, we have $\bfT(s,a,C) = \bfT(t,a,C)$
  for all $a \in \calA$, and $\bfR(s,C) = \bfR(t,C)$. So, $R$~is an
  $\IMC$ bisimulation. Reversely, if $R$ is an $\IMC$ bisimulation,
  then $R(s,t)$ implies for all $C \in S/R$, we have $\bfT(s,a,C) =
  \bfT(t,a,C)$ for all $a \in \calA$, and $\bfR(s,C) =
  \bfR(t,C)$. Thus, $\lift(R,\bools)^{\mkern1mu \calA} \times
  \lift(R,\nnreals)^{\mkern1mu \Delta} ( \thetaI(s) , \thetaI(t)
  )$. So, $R$~is a $\FuTS$ bisimulation.
\end{proof}

\blankline

\noindent
For~$\IMC$ we have another proof of concrete bisimilarity being equal
to behavioral equivalence, a result also presented
in~\cite{LMV12:accat}. However, here we see better how the
bisimulation scheme guides the correspondence result for standard
bisimulation for~$\IMC$, on the one hand, and $\FuTS$ bisimulation, on
the other hand.


\section{Quantitative automata as nested and general $\FuTS$}
\label{sec:nested}

In this section we show that $\FuTS$ and their associated notion of
bisimulation suit probabilistic automata as well as Markov
automata. For the latter fact to prove we need the full generality of
Theorem~\ref{th-futs-bisim-beh-equiv}.

\subsection{Probabilistic automata}
\label{sub:pts}

\noindent
As next quantitative semantic model we consider probabilistic automata
($\PA$) originating from~\cite{SL95:njc}, and the associated notion of
strong Segala bisimulation. We follow the set-up presented
in~\cite{Hen12:facs}.

\blankline

\begin{definition}
  \begin{itemize}
    Fix a set of actions~$\calA$. 
  \item [(a)] A $\PA$ over~$\calA$ is a pair $\calP = ( S ,
    {\trans{}{\calP}} )$ where $S$ is a set of states, and
    ${\trans{}{\calP}} \subseteq S \times \calA \times \Distr(S)$ is an
    image-finite transition relation, i.e.\ the set $\lc \pi \mid s
    \trans{a}{\calP} \pi \rc$ is finite, for any state~$s \in S$, and
    action~$a \in \calA$.
  \item [(b)] An equivalence relation $R \subseteq S \times S$ is
    called a bisimulation relation for the $\PA$~$\calP = ( S ,
    {\trans{}{\calP}} )$ if, for all $s,t \in S$, $a \in \calA$, $\pi
    \in \Distr(S)$ such that $R(s,t)$ and $s \trans{a}{\calP} \pi$,
    there exists $\varrho \in \Distr(S)$ such that $t \trans{a}{\calP}
    \varrho$ and $\pi[C] = \varrho[C] \mkern1mu$, for every
    equivalence class~$C$ of~$R$.
  \item [(c)] Two states $s,t \in S$ in a $\PA$~$\calP = ( S ,
    {\trans{}{\calP}} )$ are called probabilistically bisimilar, if
    there exists a bisimulation relation~$R$ for~$\calP$ such
    that~$R(s,t)$.
  \end{itemize}
\end{definition}

\blankline

\noindent
A $\PA$ $\calP = ( S , {\trans{}{\calP}} )$ over~$\calA$ induces a
nested $\FuTS$ $\calF(\calP) = ( S , \thetaP )$ with label set~$\calA$
and semirings $\nnreals$ and~$\bools$. We define $\thetaP : S \to
\fsfn{ \mkern1mu \fsfn{S}{\nnreals} \mkern1mu }{\bools}^{\mkern1mu
    \calA}$ by
\begin{displaymath}
  \thetaP(s)(a)(\varphi)
  \iff
  s \trans{a}{\calP} \varphi
\end{displaymath}
for all $s \in S$, $a \in \calA$, $\varphi : S \to \nnreals$.  Note
the nesting of $\fsfn{{\cdot}}{\nnreals}$
and~$\fsfn{{\cdot}}{\bools}$. Also note that, if
$\thetaP(s)(a)(\varphi) = \TRUE$ then $\varphi$ is in fact a
probability distribution, since the probabilistic transition relation
connects states to probability distributions only. Finally note, if
$\thetaP(s)(a)(\varphi) = \FALSE$ for all~$\varphi$, then $\calP$
admits no $a$-transition for~$s$.

The proof of the correspondence result for~$\PA$ is along the same
lines as we have seen previously. However, now the equivalence
relation~$R$ needs to be lifted twice: to the level of
$\fsfn{S}{\nnreals}$ first, and to the level of $\fsfn{ \,
  \fsfn{S}{\nnreals} \, }{\bools}$ next.

\blankline

\begin{theorem}
  \label{th-correspondence-pa}
  Let $\calP = ( S , {\trans{}{\calP}} )$ be a $\PA$ and $R \subseteq 
  S \times S$ an equivalence relation. Then it holds that $R$~is a
  probabilistic bisimulation iff $R$~is a $\FuTS$ bisimulation
  for~$\calF(\calP)$.
\end{theorem}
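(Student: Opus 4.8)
The plan is to unfold the $\FuTS$ bisimulation condition of Definition~\ref{df-futs-bisimulation} for $\calF(\calP)$ and identify it, clause by clause, with the defining condition of probabilistic bisimulation; the overall shape of the argument is that of the proof of Theorem~\ref{th-correspondence-lts}, but now with the relation lifted twice. Fix an equivalence relation $R$ on $S$ and states $s,t$ with $R(s,t)$. Here $n=1$, $m_1=2$, $\calR_{1,1}=\nnreals$, $\calR_{1,2}=\bools$ and $\calL_1=\calA$, so the requirement is $\lift(R,\nnreals,\bools)^{\mkern1mu\calA}(\thetaP(s),\thetaP(t))$. First I would strip off the exponentiation by $\calA$, which is pointwise, reducing the condition to $\lift(R,\nnreals,\bools)(\thetaP(s)(a),\thetaP(t)(a))$ for every $a\in\calA$.

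Next I would expand the outermost lifting, namely $\lift(\cdot,\bools)$ applied to families indexed by $\fsfn{S}{\nnreals}$: exactly as in the $\LTS$ case this holds iff, for every $a\in\calA$ and every equivalence class $C$ of $\lift(R,\nnreals)$ on $\fsfn{S}{\nnreals}$, one has $\thetaP(s)(a)[C]=\thetaP(t)(a)[C]$, where the bracket is now the sum in $\bools$, i.e.\ the disjunction over $\varphi\in C$. Substituting the definition $\thetaP(s)(a)(\varphi)\iff s\trans{a}{\calP}\varphi$, this reads: for all $a\in\calA$ and all such $C$, $(\exists\,\varphi\in C:\,s\trans{a}{\calP}\varphi)\iff(\exists\,\psi\in C:\,t\trans{a}{\calP}\psi)$.

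It remains to match this with the $\PA$ clause. By definition of $\lift(R,\nnreals)$, two functions lie in the same class $C$ precisely when they assign the same total weight to each $R$-class of states, that is, $\varphi[B]=\psi[B]$ for all $B\in S/R$ --- here the bracket is the inner, state-level summation. Moreover, whenever $s\trans{a}{\calP}\varphi$ the continuation $\varphi$ is in fact a probability distribution, so the $\varphi$'s occurring in the existentials range over $\Distr(S)$, while the remaining $\lift(R,\nnreals)$-classes, which contain no reachable continuation, make both sides of the equality $\FALSE$ trivially. Hence the displayed equivalence says exactly: for every $a$, if $s\trans{a}{\calP}\varphi$ then there is $\varrho\in\Distr(S)$ with $t\trans{a}{\calP}\varrho$ and $\varphi[B]=\varrho[B]$ for every $B\in S/R$, and symmetrically with $s$ and $t$ interchanged. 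Since $R$ is symmetric this is literally the bisimulation clause of the $\PA$ definition. Quantifying over all $R$-related $s,t$ and collecting the equivalences yields that $R$ is a probabilistic bisimulation iff $R$ is a $\FuTS$ bisimulation for $\calF(\calP)$.

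I expect the only delicate point to be the bookkeeping across the two levels of the $\fsfn{\cdot}{\cdot}$ nesting: the notation $[\cdot]$ is used both for the inner, state-level summation --- which turns a distribution into its vector of per-$R$-class masses --- and for the outer, function-level summation in $\bools$ --- which turns a set of continuations into a disjunction over a $\lift(R,\nnreals)$-class. Keeping these two uses apart, together with the observation that the existential quantifiers effectively restrict attention to reachable distributions even though $\lift(R,\nnreals)$ still partitions all of $\fsfn{S}{\nnreals}$, is what makes the two formulations coincide verbatim.
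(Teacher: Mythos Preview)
Your proposal is correct and follows essentially the same route as the paper's proof: both unfold the $\FuTS$ bisimulation condition by first stripping the $\calA$-exponent, then expanding the outer $\bools$-lifting as a disjunction over $\lift(R,\nnreals)$-classes, then substituting the definition of $\thetaP$, and finally using that reachable continuations are distributions together with the symmetry of~$R$ to match the $\PA$ clause. Your extra remarks about the two levels of the bracket notation and the restriction to reachable distributions make explicit exactly the points the paper handles in its last equivalence step.
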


\begin{proof}
  For an equivalence relation~$R \subseteq S \times S$ and states $s,t
  \in S$ such that~$R(s,t)$ we have the following:
  \begin{displaymath}
    \def\arraystretch{1.3}
    \begin{array}{rclr}
      \multicolumn{4}{l}{%
        \lift( \, \lift(R,\nnreals) , \bools)^{\mkern1mu \calA}
        \mkern1mu ( \mkern1mu \thetaP(s) , \thetaP(t) \mkern1mu )
      } 
      \\ \quad & \Leftrightarrow & 
      \forall \mkern1mu a \in \calA \mkern1mu
      \forall \mkern1mu \Gamma \in
      \fsfn{S}{\nnreals}/\lift(R,\nnreals) \colon
      \thetaP(s)(a)(\Gamma) 
      \Leftrightarrow
      \thetaP(t)(a)(\Gamma) 
      & \qquad \\ \multicolumn{4}{r}{%
        (\text{definition $\lift( \lift(R,\nnreals),\bools)^{\mkern1mu \calA}$}) 
      } 
      \\ & \Leftrightarrow &
      \forall \mkern1mu a \in \calA \mkern1mu
      \forall \mkern1mu \Gamma \in
      \fsfn{S}{\nnreals}/\lift(R,\nnreals) \colon \:
      \\ & & 
      \multicolumn{2}{r}{%
      \qquad
      \exists \mkern1mu \varphi \in \Gamma \colon
      \thetaP(s)(\varphi) = \TRUE
      \; \Leftrightarrow \;
      \exists \mkern1mu \psi \in \Gamma \colon
      \thetaP(t)(\psi) = \TRUE
      \qquad \qquad
      (\text{ring structure $\bools$}) 
      } 
      \\ & \Leftrightarrow &
      \multicolumn{2}{l}{%
      \forall \mkern1mu a \in \calA \mkern1mu
      \forall \mkern1mu \Gamma \in
      \fsfn{S}{\nnreals}/\lift(R,\nnreals) \colon \:
      \exists \mkern1mu \varphi \in \Gamma \colon 
      s \trans{a}{\calP} \varphi
      \; \Leftrightarrow \;
      \exists \mkern1mu \psi \in \Gamma \colon 
      t \trans{a}{\calP} \psi
      } 
      \\ & \Leftrightarrow &
      \multicolumn{2}{l}{%
      \forall \mkern1mu a \in \calA \mkern1mu
      \forall \mkern1mu \pi \in \Distr(S) \colon
      s \trans{a}{\calP} \pi
      \Rightarrow
      \exists \mkern1mu \varrho \in \Distr(s) \colon
      t \trans{a}{\calP} \varrho \land
      \forall C \in S/R \colon \pi[C] = \varrho[C]
      } 
      \\ \multicolumn{4}{r}{%
        (\text{symmetry of~$R$, $s \trans{a}{\calP} \varphi$ implies
          $\varphi \in \Distr(S)$}) 
      } 
    \end{array}
    \def\arraystretch{1.0}
  \end{displaymath}
  As before, combination of the above equivalence and the
  respective definitions of $\PA$~bisimulation for~$\calP$ and $\FuTS$
  bisimulation for~$\calF(\calP)$ yields the result.
\end{proof}


\blankline

\noindent
Thus for~$\PA$ we obtain, via Theorem~\ref{th-futs-bisim-beh-equiv},
coalgebraic underpinning using $\FuTS$ as an intermediate model.

\subsection{Markov automata}
\label{sub:ma}

Markov automata ($\MA$) are a relative recent example of a
quantitative semantical model~\cite{EHZ10:lics,EHZ10:concur,TKPS12:concur}. It brings together non-deterministic
and probabilistic choice, and stochastic delay. 

\blankline

\begin{definition}
  Fix a set of actions~$\calA$. 
  \begin{itemize}
  \item [(a)] A Markov automaton ($\MA$) over~$\calA$ is a triple
    $\calM = ( S , {\trans{}{\calM}} , {\mtrans{}{\calM}} )$ where $S$
    is a set of states, ${\trans{}{\calM}} \subseteq S \times
    \calA \times \Distr(S)$ is the immediate transition relation, and
    ${\mtrans{}{\calM}} \subseteq S \times \nnreals \times S$ is the
    timed transition relation.
  \item [(b)] 
    For states~$s,s' \in S$, action~$a \in \calA$, and a set of
    distributions~$\Gamma \subseteq \Distr(S)$, define
    $\bfT(s,a,\Gamma \mkern1mu) \Leftrightarrow s \trans{a}{\calM}
    \pi$ for some distribution~$\pi \in \Gamma$. Moreover, define
    $\bfR(s,s') = \tssum \lc \lambda \mid s \mtrans{\lambda}{\calM} s'
    \rc$, and, for a set of states~$C$, $\bfR(s,C) = \tssum \lc
    \bfR(s,s') \mid s' \in C \rc$.
  \item [(c)] An equivalence relation $R \subseteq S \times S$ is
    called a bisimulation relation for the $\MA$~$\calM = ( S ,
    {\trans{}{\calM}} , {\mtrans{}{\calM}} )$ if, for all states~$s,t
    \in S$ 
    the following holds:
    \begin{itemize}
    \item [(i)] $\bfT(s,a,\Gamma \mkern1mu ) =
      \bfT(t,a,\Gamma \mkern1mu )$,  
    for all $a \in \calA$, and $\Gamma \in
    \Distr(S)/\lift(R, \nnreals)_{\Distr(S)}$; 
    \item [(ii)] $\bfR(s,C) = \bfR(t,C)$, for all $C\in S/R$.
    \end{itemize}
    where $\lift(R, \nnreals)_{\Distr(S)}$ is the lifting of~$R$
    to~$\Distr(S)$; in the sequel we will often omit the subscript
    $_{\Distr(S)}$ for the sake of readability.
  \item [(d)] Two states $s,t \in S$ in an $\MA$~$\calM = ( S ,
    {\trans{}{\calM}} , {\mtrans{}{\calM}} )$ are called bisimilar, if
    there exists a bisimulation relation~$R$ for~$\calM$
    with~$R(s,t)$. 
  \end{itemize}
\end{definition}

\blankline

\noindent
In part~(c), the definition of a bisimulation relation for an~$\MA$,
we encounter the use of two kinds of equivalence relations. For the
timed behaviour the comparison is at the level of states involving
equivalence classes from~$S/R$. For the immediate behaviour, though,
the comparison is at the level of distributions of states involving
equivalence classes from~$\Distr(S)/\lift(R, \nnreals)_{\Distr(S)}$,
where $\lift(R, \nnreals)_{\Distr(S)}$ denotes the lifting of~$R$ with
respect to~$\Distr(S)$, see Section~\ref{sec:preliminaries}. Note,
for~$\PA$, because of the direct definition used, the application of
the lifting operator is left implicit.

Again, distinguish the symbol~$\delta$ to denote delay, and put
$\Delta = \singleton{\delta}$. An~$\MA$ $\calM = ( S ,
{\trans{}{\calM}} , {\mtrans{}{\calM}} )$ over~$\calA$ induces a
general $\FuTS$ $\calF(\calM) = ( S , \thetaM )$, with $\thetaM = <
\thetaIM , \thetaMM >$, over the label sets~$\calA$ and~$\Delta$ and
the sequences of semirings $\nnreals$, $\bools$ and~$\nnreals$, if we
define $\thetaIM : S \to \fsfn{ \, \fsfn{S}{\nnreals} \,
}{\bools}^{\mkern1mu \calA}$ and $\thetaMM : S \to
\fsfn{S}{\nnreals}^{\mkern1mu \Delta}$ by
\begin{displaymath}
  \thetaIM(s)(a)(\varphi) \Leftrightarrow s \trans{a}{\calM} \varphi 
  \quad \text{and} \quad
  \thetaMM(s)(\delta)(s') = \bfR(s,s')
\end{displaymath}
for $s, s' \in S$, $a \in \calA$, $\varphi \in \fsfn{S}{\nnreals}$.
Note, the $\FuTS$~$\calF(\calM)$ is a combination of a nested~$\FuTS$
representing the immediate behaviour of~$\calM$ and a simple~$\FuTS$
representing the timed behaviour of~$\calM$.

\blankline

\begin{theorem}
  Let $\calM = ( S , {\trans{}{\calM}} , {\mtrans{}{\calM}} )$ be
  a~$\MA$ and $R \subseteq S \times S$ an equivalence relation. Then
  it holds that $R$~is an $\MA$ bisimulation for~$\calM$ iff $R$~is a
  $\FuTS$ bisimulation for~$\calF(\calM)$.
\end{theorem}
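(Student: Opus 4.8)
The plan is to exploit that $\calF(\calM)$ is a product $\FuTS$ whose first
coordinate $\thetaIM$ is a nested $\FuTS$ of exactly the shape treated for
$\PA$ and whose second coordinate $\thetaMM$ is a simple $\FuTS$ of exactly the
shape treated for $\CTMC$. Thus the correspondence should follow by combining
Theorems~\ref{th-correspondence-pa} and~\ref{th-correspondence-ctmc}, just as
the $\IMC$ correspondence combined the $\LTS$ and $\CTMC$ ones. Fix an
equivalence relation $R \subseteq S \times S$ and states $s,t \in S$ with
$R(s,t)$. Since $\thetaM = < \thetaIM , \thetaMM >$ and, by the definition of
the relational product from Section~\ref{sec:preliminaries}, $E_1 \times E_2$
relates $< x_1 , y_1 >$ and $< x_2 , y_2 >$ iff $E_1(x_1,x_2)$ and
$E_2(y_1,y_2)$, the $\FuTS$ transfer condition of
Definition~\ref{df-futs-bisimulation} for $\calF(\calM)$ unfolds into the
conjunction of
\begin{displaymath}
  \lift(\lift(R,\nnreals),\bools)^{\mkern1mu \calA}(\thetaIM(s),\thetaIM(t))
  \quad\text{and}\quad
  \lift(R,\nnreals)^{\mkern1mu \Delta}(\thetaMM(s),\thetaMM(t)) .
\end{displaymath}
It then suffices to show that, under $R(s,t)$, the first conjunct is equivalent
to clause~(i) of $\MA$ bisimulation and the second to clause~(ii).

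For the first conjunct I would replay, line by line, the chain of logical
equivalences in the proof of Theorem~\ref{th-correspondence-pa}, with $\thetaIM$
playing the role of $\thetaP$: both are defined by declaring $\theta(s)(a)(\varphi)$
true precisely when the underlying $a$-transition relation connects $s$ to
$\varphi$, with $\bools$ as the outer semiring, so the same unfolding applies.
This rewrites the first conjunct as
\begin{displaymath}
  \forall a \in \calA \;
  \forall \Gamma \in \fsfn{S}{\nnreals}/\lift(R,\nnreals) \colon \;
  ( \exists \varphi \in \Gamma \colon s \trans{a}{\calM} \varphi )
  \Leftrightarrow
  ( \exists \psi \in \Gamma \colon t \trans{a}{\calM} \psi ) .
\end{displaymath}
The one point needing care --- and the step I expect to be the main bookkeeping
obstacle --- is passing from equivalence classes of $\lift(R,\nnreals)$ on the
whole space $\fsfn{S}{\nnreals}$ to equivalence classes of
$\lift(R,\nnreals)_{\Distr(S)}$ on $\Distr(S)$, the form in which clause~(i) is
stated. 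This should be harmless: a target of $\trans{a}{\calM}$ is always a
distribution, so a class $\Gamma$ of $\lift(R,\nnreals)$ enters the
quantification only through $\Gamma \cap \Distr(S)$, and by the very definition
of the subscripted lifting in Section~\ref{sec:preliminaries} the nonempty sets
$\Gamma \cap \Distr(S)$ are precisely the classes of
$\lift(R,\nnreals)_{\Distr(S)}$. Hence $\exists \varphi \in \Gamma \colon s
\trans{a}{\calM} \varphi$ is equivalent to $\bfT(s,a,\Gamma \cap \Distr(S))$, and
the displayed formula is exactly clause~(i): $\bfT(s,a,\Gamma) = \bfT(t,a,\Gamma)$
for all $a \in \calA$ and $\Gamma \in \Distr(S)/\lift(R,\nnreals)_{\Distr(S)}$.

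For the second conjunct I would replay the chain of equivalences in the proof of
Theorem~\ref{th-correspondence-ctmc}, now with $\thetaMM$ in the role of
$\thetaC$ --- both are defined via $\bfR$ into $\nnreals$ --- and with the
degenerate exponentiation over $\Delta = \singleton{\delta}$ eliminated; this
rewrites the second conjunct as $\forall C \in S/R \colon \bfR(s,C) = \bfR(t,C)$,
which is clause~(ii). Combining the two reductions, $R(s,t)$ implies the $\FuTS$
transfer condition for $\calF(\calM)$ exactly when it implies clauses~(i)
and~(ii), so an equivalence relation $R$ is an $\MA$ bisimulation for $\calM$ if
and only if it is a $\FuTS$ bisimulation for $\calF(\calM)$. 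Throughout I use,
from Section~\ref{sec:preliminaries}, that $\lift(R,\nnreals)$ is an equivalence
relation and that a product of equivalence relations is again one, so that all
the liftings appearing are well defined. Via
Theorem~\ref{th-futs-bisim-beh-equiv} this finally identifies $\MA$ bisimilarity
with behavioral equivalence of the type functor of $\calF(\calM)$.
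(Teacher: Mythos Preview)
Your proposal is correct and follows essentially the same route as the paper: split the product transfer condition for $\calF(\calM)$ into its two conjuncts and reduce them, via the arguments of Theorems~\ref{th-correspondence-pa} and~\ref{th-correspondence-ctmc}, to clauses~(i) and~(ii) of $\MA$ bisimulation. Your explicit treatment of the passage from $\fsfn{S}{\nnreals}/\lift(R,\nnreals)$ to $\Distr(S)/\lift(R,\nnreals)_{\Distr(S)}$ is a detail the paper leaves implicit, but it is handled correctly.
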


\begin{proof}
  For a binary relation~$R \subseteq S \times S$ and states $s,t \in
  S$ such that~$R(s,t)$ we have the following logical equivalence:
  \begin{displaymath}
    \def\arraystretch{1.3}
    \begin{array}{rclr}
      \multicolumn{4}{l}{%
        \bigl( \mkern1mu
        \lift( \, \lift(R, \nnreals), \bools)^{\mkern1mu \calA}
        \: \times \:
        \lift(R, \nnreals)^{\mkern1mu \Delta}
        \mkern1mu \bigr)
        \mkern1mu ( \mkern1mu \thetaM(s) , \thetaM(t) \mkern1mu )
      } 
      \\ \quad & \Leftrightarrow & 
      \lift(\lift(R,\nnreals), \bools)^{\mkern1mu \calA}
      \mkern1mu ( \mkern1mu \thetaIM(s) , \thetaIM(t) \mkern1mu )
      \land
      \lift(R,\nnreals)^{\mkern1mu \Delta}
      \mkern1mu ( \mkern1mu \thetaMM(s) , \thetaMM(t) \mkern1mu )
      & \qquad \\ \multicolumn{4}{r}{%
        (\text{$\thetaM = < \thetaIM , \thetaMM >$, definition
          relational product}) 
      } 
      \\ & \Leftrightarrow &
      \forall \mkern1mu a \in \calA \mkern1mu
      \forall \mkern1mu \pi \in \Distr(S) \colon
      \\ & & \qquad
      s \trans{a}{\calP} \pi
      \Rightarrow
      \exists \mkern1mu \varrho \in \Distr(s) \colon
      t \trans{a}{\calP} \varrho \land
      \forall C \in S/R \colon \pi[C] = \varrho[C]
      \; \land \; 
      \\ & &
      \forall \mkern1mu C \in S/R \colon
      \tssum \lc \lambda \mid s \trans{\lambda}{\calC} u ,\, u \in C \rc
      =
      \tssum \lc \mu \mid t \trans{\mu}{\calC} u ,\, u \in C
      \rc
      & \\ \multicolumn{4}{r}{%
        (\text{see the proofs of Theorems \ref{th-correspondence-pa}
          and~\ref{th-correspondence-ctmc}})  
      } 
      \\ & \Leftrightarrow &
      \forall \mkern1mu a \in \calA \mkern1mu
      \forall \Gamma \in \Distr(S)/\lift(R)
      \colon 
      \bfT(s,a,\Gamma \mkern1mu ) = \bfT(t,a,\Gamma \mkern1mu ) \; \land
      \\ & & 
      \forall C \in S/R \colon 
      \bfR(s,C) = \bfR(t,C)
      & \\ \multicolumn{4}{r}{%
        (\text{symmetry~$R$, definition $\bfT$ and~$\bfR$})  
      } 
    \end{array}
    \def\arraystretch{1.0}
  \end{displaymath}
  As previously, combination of the above equivalence and the
  respective definitions of $\MA$~bisimulation for~$\calM$ and $\FuTS$
  bisimulation for~$\calF(\calM)$ yields the result.
\end{proof}

\blankline

\noindent
Markov automata feature, so to speak, simultaneously non-deterministic
branching followed by probabilistic branching on the one hand,
vs.\ stochastic branching on the other hand. This is reflected by the
corresponding $\FuTS$ being of type $\fsfn{ \, \fsfn{ {\cdot}
  }{\nnreals} \, }{\bools}^{\mkern1mu \calA} \: \times \: \fsfn{
  {\cdot} }{\nnreals}^{\mkern1mu \Delta}$. Again, by the
correspondence theorem, behavioral equivalence of the type functor
captures exactly the concrete notion of bisimulation.


\section{Concluding remarks}
\label{sec:conclusions}

We contributed to the work of providing uniform techniques for
modeling quantitative process languages. The concept of a~$\FuTS$,
originating from~\cite{DLLM09:icalp}, provides a compact way to assign
quantities to states, or other entities, by means of continuations,
leading to clean and concise descriptions of~$\SPC$,
cf.~\cite{DLLM13:cs}.

In the current paper, we extended the results we presented
in~\cite{LMV12:accat}. Here, we associate a type to a~$\FuTS$, and we
propose a systematic way, directed by the type, of lifting an
equivalence relation from the level of states to the level of the
continuations involved. The scheme allows types with products and
arbitrary nesting. The induced notion of $\FuTS$ bisimulation
coincides with behavioral equivalence, a coalgebraic notion of
identification associated with the $\Set$ functor implied by the type.
In particular, the correspondence result proved in~\cite{LMV12:accat}
now extends to nested $\FuTS$ and related functors. Various forms of
quantitative transition systems are shown to be amenable to a
representation as a $\FuTS$ including $\IMC$, $\PA$
and~$\MA$. Moreover, the concrete notion of bisimulation of these
quantitative transition systems is shown to coincide with the notion
of $\FuTS$ bisimulation as given by our scheme, and hence with
behavioral equivalence.

The main restriction of $\FuTS$ is its being based on finitely
supported functions. As a consequence, image-finiteness is inherent to
our treatment. Still the scope of application is broad. Apart from the
examples discussed here and in our earlier
work~\cite{LMV12:accat,LMV13} we have been able to model discrete
real-time with $\FuTS$ as well~\cite{LMV15:mscs}. However, replacing
the construct $\fsfn{ \mkern1mu {\cdot} \mkern2mu }{\calR}$ by the
Giry monad of measurable functions~\cite{Pan09:icp} may be an option:
on the one hand, finiteness is traded for a restricted form of
infinite, on the other hand, summation is exchanged for
integration. However, in the continuous setting, e.g.\ for hybrid
systems, broadly accepted semantical models that parallel $\LTS$ seem
to be missing.


In this paper we focused on strong notions of bisimilarity.  Weak
notions prove difficult to handle, e.g.\ for probabilistic weak
bisimulation~\cite{EHZ10:concur} or for probabilistic branching
bisimulation~\cite{AGT12:tcs}, see also~\cite{SVW09:sacs}. 
It would be interesting to investigate whether the more abstract view
based on $\FuTS$ and a categorical approach, in the line
of~\cite{MiP13,BMP14} could be of help.

\emph{Acknowledgments} The authors are grateful to Rocco De~Nicola,
Fabio Gadducci, Daniel Gebler, Michele Loreti, Jan Rutten, and Ana
Sokolova for fruitful discussions on the subject and useful
suggestions. DL and MM acknowledge support by EU Project n.~257414
{\em Autonomic Service-Components Ensembles} (ASCENS) and EU Project
n.~600708 \emph{A Quantitative Approach to Management and Design of
  Collective and Adaptive Behaviours} (QUANTICOL).

\bibliographystyle{eptcs}
\bibliography{qapl15}

\end{document}